\newenvironment{varsubequations}[1]
 {%
  \addtocounter{equation}{-1}%
  \begin{subequations}
  \def\@currentlabel{#1}%
 }
 {%
  \end{subequations}\ignorespacesafterend
 }
\newtheorem{theorem}{Theorem}
\newtheorem{definition}{Definition}
\newtheorem{lemma}{Lemma}
\newtheorem{corollary}{Corollary}
\newtheorem{remark}{Remark}
	\newcommand\blfootnote[1]{%
  \begingroup
  \renewcommand\thefootnote{}\footnote{#1}%
  \addtocounter{footnote}{-1}%
  \endgroup
}
\newcommand{\Expect}{{\rm I\kern-.3em E}}
\newcommand{\bfd}{{\mathbf{d}}}
\newtheorem{ex}{{\em Example}}
	\newcommand{\eqmv}[2]{\begin{varsubequations}{#1}\label{eq:#1}\begin{align}#2\end{align}\end{varsubequations}}
	\definecolor{mycolor1}{rgb}{0.00000,0.44700,0.74100}%
	\definecolor{mycolor2}{rgb}{0.85000,0.32500,0.09800}%
	\definecolor{mycolor3}{rgb}{0.92900,0.69400,0.12500}%
	\definecolor{mycolor4}{rgb}{0.49400,0.18400,0.55600}%
	\definecolor{mycolor5}{rgb}{0.300000,0.64314,0.00000}%
	\definecolor{gray}{rgb}{0.6,0.6,0.6}%
	\colorlet{mygreen}{green!60!black}%
	\colorlet{myblue}{mycolor1!80!black}%
	\newcommand\itemEq[1][]{%
		\ifx\relax#1\relax  \item \else \item[#1] \fi
		\abovedisplayskip=0pt\abovedisplayshortskip=0pt~\vspace*{-\baselineskip}}
	\newcommand{\myoverset}[2]{\ensuremath{\overset{\mathclap{#1}}{#2}}}
\newcommand{\itb}{\begin{itemize}}
\newcommand{\ite}{\end{itemize}}
\newcommand{\enb}{\begin{enumerate}}
\newcommand{\ene}{\end{enumerate}}
\newcommand{\eqm}[1]{\begin{align}#1\end{align}}
	\newcommand{\setcomp}{{\mathsf{c}}}
	\newcommand{\LB}{\left(}
	\newcommand{\RB}{\right)}
	\newcommand{\expn}{^{(n)}}
\DeclareMathAlphabet{\mathbit}{OML}{cmr}{bx}{it}
\DeclareMathAlphabet{\mathsf}{OT1}{cmss}{m}{n}
\DeclareMathAlphabet{\mathTXf}{OT1}{cmss}{bx}{it}
\DeclareMathOperator{\rank}{rank}
	\DeclareMathOperator{\TX}{TX}
	\newcommand{\Cb}{{{\mathbb{C}}}}
	\newcommand{\Fb}{{{\mathbb{F}}}}
	\newcommand{\Ind}{{\mathds{1}}} 
\newcommand{\gv}{\mathbf{g}}
\newcommand{\vv}{\mathbf{v}}
\newcommand{\Dc}{{{\mathcal{D}}}}
\newcommand{\Tc}{{{\mathcal{T}}}}
\newcommand{\Zc}{{{\mathcal{Z}}}}
\newcommand{\bA}{\mathbf{A}}
\newcommand{\bP}{\mathbf{P}}
\newcommand{\bb}{\bm{b}}
	\newcommand{\conv}{{\text{conv}}}
\begin{document}
\pagenumbering{gobble}

\title{Resolving the Feedback Bottleneck of \\ Multi-Antenna Coded Caching}

\author{Eleftherios Lampiris, Antonio Bazco-Nogueras, Petros Elia\\ lampiris@tu-berlin.de, \{bazco, elia\}@eurecom.fr}

\maketitle
\nocite{maddah2014fundamental}
\begin{abstract}\blfootnote{E. Lampiris is with the Communications and Information Theory Group (CommIT) of the Technical University of Berlin, 10553 Berlin, Germany and A. Bazco-Nogueras and P. Elia are with the Communication Systems Department of EURECOM, 06410 Sophia Antipolis, France. This work was partially supported by the ANR project ECOLOGICAL-BITS-AND-FLOPS and by the European Research Council under the EU Horizon 2020 research and innovation program / ERC grant agreement no. 725929. (ERC project DUALITY) and by the European Research Council under the ERC grant agreement N. 789190 (project CARENET). This work was conducted while the first author was at EURECOM. Part of this work was presented in the 2018 IEEE International Symposium on Information Theory (ISIT) \cite{lampiris2018lowCSIT}.}
Multi-antenna cache-aided wireless networks have been known to suffer from a severe feedback bottleneck, where achieving the maximal Degrees-of-Freedom (DoF) performance required feedback from all served users. These costs matched the caching gains and thus scaled with the number of users.
In the context of the $L$-antenna MISO broadcast channel with $K$ receivers having normalized cache size $\gamma$, we pair a fundamentally novel algorithm together with a new information-theoretic converse, and identify the optimal tradeoff between feedback costs and DoF performance, by showing that having CSIT from only $C<L$ served users implies an optimal one-shot linear DoF of $C+K\gamma$. As a side consequence of this, we also now understand that the well known DoF performance $L+K\gamma$ is in fact exactly optimal.
In practice, the above means that we are now able to disentangle caching gains from feedback costs, thus achieving unbounded caching gains at the mere feedback cost of the multiplexing gain. This further solidifies the role of caching in boosting multi-antenna systems; caching now can provide unbounded DoF gains over multi-antenna downlink systems, at no additional feedback costs. The above results are extended to also include the corresponding multiple transmitter scenario with caches at both ends.
\end{abstract}

\section{Introduction}

The seminal work of Maddah-Ali and Niesen \cite{maddah2014fundamental} revealed how caching modest amounts of content at the receivers has the potential to yield unprecedented reductions in the delivery delay of content-related traffic.

Specifically, the work in \cite{maddah2014fundamental} considered a shared-link broadcast channel, where a transmitter is tasked with serving content from a library of $N$ files to $K$ receiving users. Each user is endowed with a cache that can store a fraction $\gamma\in[0,1]$ of the library, thus yielding a cumulative cache size of
$t\triangleq K\gamma$, which essentially means that each part of the library can appear $t$ different times across the different caches. The approach of \cite{maddah2014fundamental} was to design the cache placement algorithm in such a manner that desired content that resides in different  caches could be combined together to form a single transmitted multicast signal that carries information for multiple users. In turn, these same users would then access their individual caches in order to remove all the unwanted interference from the multicast signal, and thus decode their desired message. 
In this shared-link (noiseless, wired) setting, with unitary link capacity, this strategy allows for a worst-case (normalized) delivery time of
\begin{equation}
	\mathcal{T}_{1}(t) = \frac{K-t}{1+t},
\end{equation}
which implies an ability to serve $1+t$ users at a time.
This performance is shown in \cite{yuTradeoff2TransIT2019} to be within a multiplicative gap of $2.01$ of the optimal gain, 
while under the assumption of uncoded placement the above performance is exactly optimal \cite{wanOptimalityTransIT2020,YuMA18}.

The direct extension of this result to the equivalent high Signal-to-Noise Ratio (high-SNR) single-antenna \emph{wireless} Broadcast Channel (BC) --- where similarly the long-term capacity of each point-to-point link is normalized to $1$ file per unit of time --- implies a Degrees-of-Freedom\footnote{We properly define the DoF in Section~\ref{se:converse_0}, Definition~\ref{def:dof}.} (DoF) performance of
\begin{equation}
	\mathcal{D}_{1}(t) \triangleq \frac{K-t}{\mathcal{T}_{1}(t)} = 1+t,
\end{equation}
which can be achieved without any Channel State Information at the Transmitter (CSIT).

This came in direct contrast with multi-antenna systems which are known to also provide DoF gains but only with very high feedback costs that scale with these DoF gains. As it is known (cf.~\cite{GrassmanTse2002}, \cite{lozanoFundCoop}), such feedback costs are the reason for which most multi-antenna solutions fail to scale (cf.~\cite{caire2003achievable,jafar2005isotropic,huang2012degrees,lapidoth2006capacity,vaze2012degree,maddah2012completely,Bazco2018KmisoBC,jafar2012blind,yang2013degrees,gou2012optimal,chen2012degrees,chen2013toward,tandon2013synergistic,chen2015two}). The huge impact of feedback on the network's performance has triggered a major research interest in understanding how imperfect, partial, or limited feedback can help improve system performance~\cite{Love2008_limitedFeedbackOverview_JSAC}.
Among the vast literature that resulted from this interest, different works have focused, for example, on analyzing the impact of feedback in interference-limited multi-antenna cellular networks \cite{Lee2011_limitedFeedbackIBC_TWC, Park2016_optFeedbackCellular_TWC,Min2018_optFeedbackCellular_TWC}, the  feasibility of Interference Alignment\cite{ElAyach2012_IAfeedback_TWC}, the limited-feedback resource allocation in heterogeneous wireless networks \cite{Mokari2016_limFeedbackHetNet_TransVehicTech}, the capacity for Gaussian multiple access channels with feedback~\cite{Sula2020capacityGMAC}, and the effect of either rate-limited feedback in the interference channel \cite{Vahid2012_limitedFeedback_TIT} or SNR-dependent feedback in the Broadcast Channel\cite{davoodi2016gdof}.
Recently, the analysis of the significance of feedback has been extended also to secure communications~\cite{Bassi2019wiretap} as well as to the capacity of burst noise-erasure channels~\cite{Song2019burstErasureFeedback}.

\subsection{Multi-antenna cache-aided channels} At the same time, there is substantial interest in combining the gains from caching with the traditional multiplexing gains of feedback-aided multi-antenna systems. Combining the two ingredients is only natural, given the promise of coded caching and the fact that multi-antenna technologies are currently the backbone of wireless systems. One can argue that coded caching stands a much better chance in becoming a pertinent ingredient of wireless systems, if it properly accounts for the fact that the most powerful and omni-present resource in current networks is multi-antenna arrays.

This direction seeks to merge two seemingly opposing approaches, where traditional feedback-based multi-antenna systems work by creating parallel channels that separate users' signals, while coded caching fuses users' signals and counts on each user receiving maximum interference.
In this context, the work in \cite{shariatpanahiMultiserverTransIT2016} analyzed the wired multi-server ($L$ servers) setting, which can easily be seen to correspond to the high-SNR cache-aided MISO BC setting with $L$ transmit antennas. An interesting outcome of this work is the revelation that multiplexing and caching gains can be combined additively, yielding an achievable DoF equal to
\begin{equation}\label{eqMSDoF}
	\mathcal{D}_{L}(t)=L+t.
\end{equation}
In the same spirit, the work in \cite{7857805} studied the $K_{T}$-transmitter, fully-connected network where the transmitters are equipped with caches that can each store a fraction $\gamma_{T}\in[1/K_T,1]$ of the library, amounting to a cumulative (transmitter-side) cache size of $t_{T} = K_{T}\gamma_T$. Under a cumulative receiver-side cache size of $t$, the achievable DoF this time took the form
\begin{equation}\label{eqInterferenceDoF}
	\mathcal{D}_{t_{T}}(t)=t_{T}+t.
\end{equation}

As shown in \cite{7857805}, under the assumption of uncoded placement, the performance in~\eqref{eqMSDoF}-\eqref{eqInterferenceDoF} is at a factor of at most 2 from the optimal one-shot linear DoF.
Since then, many works such as \cite{shariatpanahi2017multi,tolli2017multi,roigCachesBothEndsICC2017,8007039,lampiris2018adding,shariatpanahi2018multi,lampirisBridgingAsilomar2019,lampirisBridgingSPAWC2019} have developed different coded caching schemes for the multi-transmitter and the multi-antenna settings.

\subsection{Scaling feedback costs in multi-antenna coded caching}

While the single antenna case in~\cite{maddah2014fundamental} provides the near optimal caching gain $t$ without requiring any CSIT, a significant feedback problem arises in the presence of multiple antennas.
Specifically, all known multi-antenna coded caching methods \cite{shariatpanahiMultiserverTransIT2016,shariatpanahi2017multi,tolli2017multi} that achieve the full DoF $L+t$ require each of the $L+t$ benefiting receivers to communicate feedback to the transmitter. To make matters worse, the problem extends to the dissemination of CSI at the receivers (CSIR), where now the transmitter is further forced to incorporate in this CSIR additional information on the CSIT-based precoders of all the $L+t$ benefiting users (\emph{global CSIR}).

To demonstrate the structural origins of these CSI costs, we focus on a simple instance of the multi-server method in \cite{shariatpanahiMultiserverTransIT2016}, which acts as a proxy to other methods with similar feedback requirements.
\begin{ex}\label{ex:HighCSI}
	Let us consider the $L=2$-antenna MISO BC, with $K=4$ receiving users and cumulative cache size $t=2$. In this setting, the algorithm of \cite{shariatpanahiMultiserverTransIT2016} can treat $L+t = 4$ users at a time. Assuming that users $1,2,3,4$ request files $A,B,C,D$, respectively, each of the three transmissions of \cite{shariatpanahiMultiserverTransIT2016} takes the form\footnote{For sake of readability, in the examples provided throughout the document, we will omit the commas between numbers belonging to a set, such that, for example, the part of the file $A$ stored at the users in the set $\{3,4\}$ will be denoted by $A_{34}$.}
	\begin{equation}\label{eq:example1}
	\begin{aligned}
		\!\mathbf{x} = {}&\mathbf{h}^{\perp}_{4}(A_{23}\oplus B_{13}\oplus C_{12})+\mathbf{h}^{\perp}_{3}(A_{24}\oplus B_{14}\oplus D_{12}) + {}
		\\
		&\! +\mathbf{h}^{\perp}_{2}(A_{34}\oplus C_{14}\oplus D_{13})+\mathbf{h}^{\perp}_{1}(B_{34}\oplus C_{24}\oplus D_{23})
	\end{aligned}
	\end{equation}
where $\mathbf{h}^{\perp}_{k}$ denotes the precoder that is orthogonal to the channel of user $k$, and where
 $W_{ij}$ denotes the part of file $W\in\{A,B,C,D\}$ that is cached at users $i$ and $j$.
We can see that the transmitter must know all users' channel vectors, $\mathbf{h}_{k}, ~k\in\{1,2,3,4\}$, in order to form the four precoders.
In addition, each receiver must know the composite channel-precoder product for each precoder in order to be able to decode the desired subfile (e.g. receiver 1 must know $\mathbf{h}_{1}^{\dagger}\mathbf{h}_{1}^{\perp}$ as well as $\mathbf{h}_{1}^{\dagger}\mathbf{h}_{2}^{\perp}$, $\mathbf{h}_{1}^{\dagger}\mathbf{h}_{3}^{\perp}$ and $\mathbf{h}_{1}^{\dagger}\mathbf{h}_{4}^{\perp})$.
This implies a feedback cost equal to $L+t=4$ feedback-bearing users per transmission\footnote{In practical terms, this implies $L+t=4$ uplink training slots for CSIT acquisition and $L+t=4$ downlink training slots for global CSIR acquisition. We note that global CSIR acquisition can be performed by communicating each precoder to all users simultaneously, a process that is described in Appendix~\ref{sec:csi}.}.
\end{ex}

As we know (see for example~\cite{GrassmanTse2002,kobayashiCsiDoFcomparisonISIT2012}), such scaling feedback costs\footnote{In general we note that, in the context of Frequency Division Duplexing, the previously mentioned feedback results in a CSIT cost of $L+t$ feedback vectors. On the other hand, in a Time Division Duplexing environment, it leads to $L+t$ uplink training time slots for CSIT acquisition and an additional cost of $L+t$ downlink training time slots for global CSIR acquisition.
} can consume a significant portion of the coherence time, thus resulting in diminishing DoF gains.

\subsection{State of art}Motivated by this feedback bottleneck, different works on multi-antenna (and multi-transmitter) coded caching have sought to reduce CSI costs. However, in all known cases, any subsequent CSI reduction comes at the direct cost of substantially reduced DoF. For example, the works in \cite{ZFE:15,8007039} consider reduced quality CSIT, but yield a maximum DoF that is bounded close to $t+1$. Further, the works in \cite{nejadMixedCSITICC2017,zhang2017fundamental,piovanoGDoFMISO2019TransIT} consider delayed or reduced quality CSIT at the expense though of lower DoF performance, while the work in \cite{lampiris2017cache} considers only statistical CSI, but again achieves significantly lower DoF. Moreover, the work in \cite{ngo2018scalable} uses ACK/NACK type CSIT to ameliorate the issue of unequal channel strengths (cf.~\cite{lampirisUnevenChannelsIZS2020,joudehMixedTrafficArXiv2020}), yet achieving no multiplexing gains.
Similar results can be found in \cite{maddahCacheAidedICisit2015,roigCachesBothEndsICC2017,xuCooperativeTxRxISIT2016,hachemInterferenceNetsTransIT2018,senguptaCacheAidedWirelessCISS2016,zhangFundamentalCloudTransIT2019,cao2016fundamental,caoTreatingContentTransWireless2019}, in more decentralized scenarios that involve multiple cache-aided transmitters.

As a conclusion, both for the cache-aided MISO BC \cite{shariatpanahiMultiserverTransIT2016} as well as for its multi-transmitter equivalent~\cite{7857805}, the corresponding DoF $\mathcal{D}_{L}(t) = L+t$, has been known to require perfect feedback from all $L+t$ served users.

\subsection{Summary of contributions} The focus of this work is to establish and achieve the optimal relationship between feedback costs and DoF performance in multiple-antenna cache-aided settings.

As a consequence of our work, we now know that:
\begin{enumerate}
		\item The optimal DoF of the cache-aided MISO BC, under the assumptions of uncoded placement and one-shot linear schemes, takes the form
		\begin{equation}
			\mathcal{D}_{L}(t) =  L + t,
		\end{equation}
		{which tightens the previously known bound by a multiplicative factor of $2$.}
		{
		\item The optimal --- under the same assumptions --- DoF when feedback is limited to $C\ge 1$ participating users, takes the form
		\begin{equation}
			\mathcal{D}_{L}(t, C) = t + \min (L, C).
		\end{equation}
		}
		{
		\item Similarly, in the multi-transmitter scenario, with transmitter-side cumulative cache size $t_T$ and with each transmitter having $L_T$ antennas, the above optimal DoF performance takes the form
			\begin{equation}
				\mathcal{D}_{t_{T}\cdot L_{T}}(t, C) = \min ( t_T\cdot L_T, C)  + t.
			\end{equation}
		}
\end{enumerate}

The above are a direct outcome of a completely novel coded caching algorithm, which manages to achieve the optimal performance given any amount of available feedback. In particular, in the $L$-antenna MISO BC, or in the equivalent fully connected multi-antenna multi-transmitter setting with $t_{T} L_{T}= L$:
\begin{enumerate}
	\item The algorithm manages to achieve the optimal DoF
	\begin{equation}
		\mathcal{D}_{L}(t)= L+t
	\end{equation}
	and do so with a minimal feedback cost
	\begin{equation}
		C=L
	\end{equation}
	which substantially diminishes the previously known cost of $L+t$.
	\item The algorithm optimally degrades its DoF to
	\begin{equation}
		\mathcal{D}_{L}  ( t, C)= {C+ t}
	\end{equation}
when feedback is reduced to $C\in \{2,..., L-1\}$. This is an improvement over the state of art which, for the same DoF, would require a feedback cost of $C+t$.
The novelty of our scheme lies in the deviation from the traditional clique-based structure that most schemes are based on. Rather than requiring from each user to ``cache-out'' $t$ subfiles in a XOR as is commonly done, we are able to design transmissions that can benefit from a two-pronged approach: some users cache-out $t+L-1$ subfiles, and thus do not require the assistance of CSI-aided precoding, while others only cache-out $\frac{t}{L}$ subfiles but for that they rely on feedback. This allows our scheme to avoid the need to eventually ``steer-away'' subfiles from every active user, which had been the reason for the high feedback costs in all known prior designs.
\end{enumerate}

Finally an important contribution of this work can be found in the novel outer bound. This bound extends the effort in \cite{7857805} in two crucial ways.
\begin{itemize}
	\item A main contribution of the converse result is the incorporation of the limited feedback constraint. We integrate this new restriction by characterizing its impact on the number of users that we can serve simultaneously, which is obtained by exploiting the dimensionality of the linear system implicit in the multi-user transmission with constrained feedback.
	\item Apart from the contribution of being able to account for feedback-limited transmissions, we are able to improve the converse result of~\cite{7857805} by leveraging on the following insights: First, we exploit the symmetry of the configuration, which allows us to express the objective function of the optimization problem in terms only of the number of transmitters and the number of receivers that are caching each packet --- eliminating any dependence on the specific packet or node. Second, this symmetrization allows us to eventually produce an objective function that has monotonicity and convexity properties which in turn allow us to manipulate the solution to yield a tight bound. It might be worth noting that, as a result of this new approach, our converse also establishes exact optimality for a few subsequent works.
\end{itemize}

 \subsection{Notation} Symbols $\mathbb{N}, \mathbb{C}$ denote the sets of natural and complex numbers, respectively. For $n,k\in\mathbb{N},~n\geq k$, we denote the binomial coefficient with $\binom{n}{k}$, while $[k]$ denotes the set $\{1,2, ..., k\}$. For the bitwise-XOR operator we use $\oplus$. Greek lowercase letters are mainly reserved for sets. We further assume that all sets are ordered, and we use $| \cdot |$ to denote the cardinality of a set. Bold lowercase letters are reserved for vectors, while for some vector $\mathbf{h}$, comprised of $Q$ elements, we denote its elements as~$\mathbf{h}(q)$,~$q\in[Q]$, i.e., $[ \mathbf{h}(1), \mathbf{h}(2), ..., \mathbf{h}(Q)]\triangleq \mathbf{h}^{T}$. Bold uppercase letters are used for matrices, while for some matrix $\mathbf{H}$ we denote its $i$-th row, $j$-th column element as $\mathbf{H}(i,j)$.

\section{System Model}\label{se:sys_mod}

We consider the cache-aided MISO BC where an $L$-antenna transmitter serves $K$ single-antenna receiving cache-aided users. The distributed version of this setting, with multiple cache-aided transmitters, is discussed in Section~\ref{app:achiev_multiTX}.

In our setting, the transmitter has access to a library of $N\ge K$ files $\mathcal{F}=\{W^{(n)}\}_{n=1}^{N}$, of equal size. Each user has a cache that can fit a fraction $\gamma\in[0,1]$ of the library, and thus, collectively the users can store $t = K\gamma$ times over the entire library.
We assume that during delivery the users request their desired file simultaneously, and that each requested file is different. The users' file demand vector is denoted as $\mathbf{d} = \{d_{1}, ..., d_{K}\}$, implying that each user $k$ will request file $W^{(d_{k})}$.  In this setting, the received signal at user $k\in[K]$ takes the form
\begin{equation}
	y_{k}= \mathbf{h}_{k}^{\dagger} \mathbf{x}+w_{k},
\end{equation}
where $\mathbf{x}\in\mathbb{C}^{L \times 1}$ denotes the transmitted signal-vector from the $L$-antenna transmitter satisfying the power constraint $\mathbb{E}\left\{\|\mathbf{x}\|^{2}\right\}\le P$. In the above, $\mathbf{h}_{k}\in\mathbb{C}^{L\times 1}$ denotes the random-fading channel vector of user $k$, {which is assumed to be drawn from a continuous non-degenerate distribution}. This fading process is assumed to be statistically symmetric across users. Finally, the additive noise $w_k\thicksim\mathbb{C}\mathcal{N}(0,1)$ experienced at user $k$ is assumed to be Gaussian. The work focuses on the DoF performance, and thus the SNR is considered to be large. We also assume that the quality of CSIT is perfect, and we define the feedback amount required in each (packet) transmission as follows.

	\begin{definition}[Feedback Cost]\label{def_feedback_costs}
		A communication is said to induce \emph{feedback cost} $C$ if $C$ users need to communicate their CSI at the transmitter, and at the same time the transmitter needs to communicate information on $C$ precoders to the users.
	\end{definition}

\paragraph*{Structure of the paper} In Section~\ref{sec:main} we present the main results of this work and provide a preliminary example of the achievable scheme. Further, in Section~\ref{sec:SchemeDescription} we fully describe the achievable scheme for the single transmitter ($L$ antennas) case, and elaborate on the example of Section~\ref{sec:main}. In Section~\ref{app:achiev_multiTX} we extend the scheme to the multi-transmitter case. In Section~\ref{secConverse} we describe the proof of the converse result, while in Section~\ref{secConclusion} we provide general conclusions. The subsequent appendices include proofs, as well as a discussion on the CSIT and global CSIR feedback acquisition process that conveys the precoder information to the users.

\section{Main Results and an Example}\label{sec:main}
	We proceed with our main results, first by considering the single transmitter case (with $L$ transmit antennas), and later by extending the result to the general $K_T$-transmitter setting.
		We remind the reader that optimality is under the assumptions of one-shot linear schemes with uncoded cache placement, while we note that we directly omit the trivial bound $\mathcal{D}_{L} ( t, C) \le K$, and that we also do not consider the case of $C=0$ as this corresponds to the well known result in \cite{maddah2014fundamental}. We additionally recall that the setting asks that each of the $K$ receiving users is equipped with an identically-sized cache of normalized size $\gamma$, thus corresponding to a cumulative receiver-side cache size of $t = K\gamma$.

	\begin{theorem}\label{theoSingleTXresult}
		In the $K$-user cache-aided MISO BC with $L$ transmit antennas, cumulative cache size $t$, and feedback cost $C$, the optimal DoF is
		\begin{equation}
			\mathcal{D}_{L}(t, C)=t+\min(L, C).
		\end{equation}
	\end{theorem}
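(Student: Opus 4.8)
The plan is to establish matching achievability and converse bounds, since Theorem~\ref{theoSingleTXresult} is an exact optimality statement under one-shot linear schemes with uncoded placement. Throughout write $L' \triangleq \min(L,C)$.

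\textbf{Achievability} ($\mathcal{D}_L(t,C) \ge t + L'$). I would construct a delivery scheme that, in each channel use, serves $t + L'$ users while requiring only $L' \le C$ of them to feed back CSI, so that Definition~\ref{def_feedback_costs} is met. The design rests on a two-pronged transmit structure: the transmit vector $\mathbf{x}$ is built so that within each served group a ``cache-heavy'' subset of users can cancel all the interference they receive purely from cached subfiles (needing no precoding, hence no feedback), while a ``feedback-reliant'' subset of at most $L'$ users is protected by CSIT-based precoders that null the cross-interference using the channels of exactly those $L'$ users. Concretely I would (i) fix an uncoded placement whose subpacketization lets the cache-heavy users store enough subfiles (on the order of $t+L'-1$) to cancel an entire coded-caching term, while the feedback-reliant users store only the fraction needed to align with the precoded streams; (ii) form $\mathbf{x}$ as a superposition of coded-caching XORs carried on precoders $\mathbf{h}^{\perp}$ orthogonal only to the feedback-reliant users' channels; (iii) verify decodability for both user types, the cache-heavy users by XOR cancellation and the feedback-reliant ones because the precoders zero-force their interfering streams; and (iv) count delivered subfiles to obtain a normalized delivery time yielding DoF $t+L'$, checking that each transmission touches at most $L'\le C$ distinct feedback vectors.

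\textbf{Converse} ($\mathcal{D}_L(t,C) \le t + L'$). I would argue in three stages, extending the approach of~\cite{7857805}. First, \emph{symmetrization}: using the statistically symmetric fading and the uncoded-placement assumption, reduce to schemes whose performance depends only on combinatorial counts --- how many users cache each subfile --- so the objective can be written purely in terms of these multiplicities, with no dependence on the identity of any particular packet or user. Second, the \emph{per-transmission dimension bound}: I would show that a single linear transmit vector $\mathbf{x}\in\mathbb{C}^{L}$ under feedback cost $C$ can separate at most $L'$ streams by precoding --- a rank argument, since the precoders live in $\mathbb{C}^{L}$ while only $C$ channels are known to the transmitter --- so the ``spatial'' gain per channel use is capped at $\min(L,C)$ regardless of the caches. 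Third, I would combine this cap with the cache-aided index-coding constraints, feed the multiplicities from the first stage into the resulting optimization problem, and exploit its monotonicity and convexity in the count variables to bound the caching contribution by $t$, so that the two budgets add to give $\mathcal{D}_L(t,C)\le t+L'$.

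The main obstacle I anticipate is the per-transmission dimensionality bound in the converse: making precise how the feedback budget $C$ translates into an at-most-$\min(L,C)$ cap on the spatial degrees of freedom for an \emph{arbitrary} linear transmission --- not the specially structured one used in achievability --- and then solving the symmetrized optimization tightly via convexity. A secondary difficulty, on the achievability side, is choosing the subpacketization so that the cache-heavy and feedback-reliant roles coexist consistently across all transmissions while holding the per-transmission feedback at exactly $L'$; once that construction is fixed, the decodability checks and DoF accounting should be routine.
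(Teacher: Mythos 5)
Your proposal is correct and follows essentially the same route as the paper: your achievability is the paper's two-pronged design of Section~\ref{sec:SchemeDescription} (a set $\lambda$ of $\min(L,C)$ zero-forced, feedback-providing users paired with a set $\pi$ of $t$ users who each cache out $t+\min(L,C)-1$ subfiles, with $C<L$ handled by shutting down antennas), and your converse is the paper's Lemma~\ref{lem:bound_packets} rank/dimensionality bound on simultaneously served packets under $C$-limited CSIT, fed into the symmetrized linear program of~\cite{7857805} and tightened via convexity and Jensen's inequality. The two obstacles you flag are precisely where the paper invests its effort (Appendix~\ref{se:proof_lemma_simultaneous} for the dimension bound, Appendix~\ref{se:converse_a} for the convexity step).
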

	\begin{proof}
		The achievability part is constructive and is described in Section~\ref{sec:SchemeDescription}, while the converse is proved in Section~\ref{secConverse}.
	\end{proof}

	Let us consider now the more general setting where the $L$-antenna transmitter is substituted by $K_{T}$ cache-enabled transmitters. Each transmitter is equipped with $L_{T}$ transmit antennas, and is able to store a fraction $\gamma_{T} \in [1/K_T,1]$ of the library, inducing a cumulative cache size of $t_{T}\triangleq K_{T} \gamma_T$.
	\begin{theorem}\label{theoMultiTXresult}
		In the $K_{T}$-transmitter wireless network, where each transmitter is equipped with $L_{T}$ transmit antennas, with transmitter-side cumulative cache size $t_{T}$, receiver-side cumulative cache size $t$, and feedback cost $C$, the optimal DoF takes the form
		\begin{equation}
			\mathcal{D}_{L_{T} t_{T}}(t, C)=t+\min(L_{T}t_{T}, C).
		\end{equation}
	\end{theorem}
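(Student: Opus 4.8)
The plan is to treat Theorem~\ref{theoMultiTXresult} as a ``distributed'' version of Theorem~\ref{theoSingleTXresult}, by identifying $L\triangleq L_{T}t_{T}$ as the effective number of transmit antennas and then splitting the argument into an achievability reduction and a converse. For achievability I would show that a group of $t_{T}$ cooperating transmitters, each carrying $L_{T}$ antennas, can emulate a single virtual array of $L_{T}t_{T}=L$ antennas, which immediately lets me re-use the single-transmitter scheme of Section~\ref{sec:SchemeDescription}; for the converse I would establish the matching outer bound $\mathcal{D}_{L_{T}t_{T}}(t,C)\le t+\min(L_{T}t_{T},C)$ directly in the multi-transmitter model, recovering Theorem~\ref{theoSingleTXresult} as the special case $K_{T}=1$, $\gamma_{T}=1$ (so $t_{T}=1$) and $L_{T}=L$.

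For the achievability direction, the first step is a transmitter-side placement in which every subfile is stored at exactly $t_{T}$ of the $K_{T}$ transmitters, for instance through the symmetric assignment over the $\binom{K_{T}}{t_{T}}$ transmitter groups (combined, if needed, with an MDS-type coding across transmitters). The crucial property to secure is that every precoded stream that the virtual $L$-antenna algorithm of Section~\ref{sec:SchemeDescription} would transmit can be realized in a distributed fashion: for each such stream there must be a group of $t_{T}$ transmitters that jointly cache all subfiles combined in the stream and that jointly expose $L_{T}t_{T}=L$ antennas, so that the length-$L$ precoder is partitioned into $K_{T}$ blocks and each participating transmitter locally computes its $L_{T}$-dimensional contribution from its own cache. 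Once this co-location holds, the received signals, the cache-enabled interference cancellation, and the decodability checks are verbatim those of the single-transmitter analysis, and the feedback bookkeeping is inherited unchanged: only $\min(L,C)=\min(L_{T}t_{T},C)$ users must feed back CSIT per transmission, which yields the claimed $t+\min(L_{T}t_{T},C)$.

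The converse is where I expect the real difficulty, since a naive genie that hands the full library to every transmitter and merges them would only bound the spatial gain by the loose $\min(K_{T}L_{T},C)$ rather than by $\min(L_{T}t_{T},C)$; the transmitter-side cache budget $t_{T}$ must therefore be used in an essential way. Following the outer-bound strategy announced in the contributions, I would exploit the symmetry of the setting to rewrite the delivery-cost objective so that it depends only on how many transmitters and how many receivers cache each packet, eliminating the identity of the packet and of the node; the feedback constraint then enters by bounding, via the rank of the cooperative transmit channel, the number of users that can be separated in one shot. Symmetrizing in this way should produce an objective with monotonicity and convexity properties that I would optimize to obtain the tight cap $\min(L_{T}t_{T},C)$ on the multiplexing gain together with the additive caching term $t$. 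The main obstacle is precisely to convert the averaged per-packet transmitter replication (which the cumulative constraint pins to $t_{T}$) into a hard limit on the usable cooperative dimension, and to verify that the convexity argument closes the gap to the achievable scheme for every admissible $(t,C)$; the full construction and the detailed bound are carried out in Section~\ref{app:achiev_multiTX} and Section~\ref{secConverse}, respectively.
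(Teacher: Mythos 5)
Your proposal follows essentially the same route as the paper: achievability by distributed emulation of a virtual $L=L_{T}t_{T}$-antenna array, with each length-$L$ precoder assembled by the $t_{T}$ transmitters caching the relevant content, and a converse proved directly in the multi-transmitter model via a rank/dimension bound on the number of simultaneously served users (limited by $\min(C,L_{T}u)$ for a packet replicated at $u$ transmitters) followed by demand symmetrization and a convexity/Jensen argument that pins the transmitter replication to $t_{T}$. The only divergence is in implementation detail: the paper uses a cyclic whole-file transmitter placement so that each subfile sits at exactly $t_{T}$ transmitters with no extra subpacketization, and it precodes each subfile of a XOR separately with its own subfile-dependent distributed precoder rather than requiring all subfiles of a stream to be co-located at one transmitter group, whereas your $\binom{K_{T}}{t_{T}}$/MDS placement with per-stream co-location also works but at the cost of additional subpacketization.
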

	\begin{proof}
		The achievability part of the proof is described in Section~\ref{app:achiev_multiTX}, while the converse is described in Section~\ref{secConverse}.
	\end{proof}	
	\begin{remark}\label{re:remark_equate}
		Comparing Theorem~\ref{theoSingleTXresult} with Theorem~\ref{theoMultiTXresult}, we can see that the cache-aided MISO BC and its multi-transmitter equivalent (corresponding to $L_{T}t_{T} = L$) are akin not only in terms of DoF performance, but also in terms of the CSIT required to achieve this performance. Their behavior is the same, irrespective of the amount $C\ge 1$ of available feedback.
	\end{remark}
The following corollary establishes the \emph{exact} optimal DoF performance of the considered multi-antenna settings.
\begin{corollary}
	The optimal DoF of the $L$-antenna MISO BC with $K$ users and cumulative cache size $t$, takes the form
	\begin{equation}
		\mathcal{D}_{L}(t) = L+t.
	\end{equation}
\end{corollary}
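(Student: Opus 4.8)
The plan is to obtain this corollary as an immediate specialization of Theorem~\ref{theoSingleTXresult}. The quantity $\mathcal{D}_{L}(t)$ denotes the optimal DoF of the setting when no constraint is placed on the feedback, whereas $\mathcal{D}_{L}(t, C)$ is the optimal DoF once the feedback cost is capped at $C$. The first step is therefore to argue that relaxing the feedback restriction can only help: any scheme admissible under a budget $C$ remains admissible under any larger budget, so $\mathcal{D}_{L}(t, C)$ is non-decreasing in $C$ and $\mathcal{D}_{L}(t) = \sup_{C} \mathcal{D}_{L}(t, C)$.

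The second step is to evaluate this supremum using the closed form supplied by Theorem~\ref{theoSingleTXresult}. Since $\mathcal{D}_{L}(t, C) = t + \min(L, C)$, the map $C \mapsto t + \min(L, C)$ strictly increases until $C = L$ and then saturates, so its maximal value $t + L$ is attained at $C = L$ and maintained for every $C \ge L$. Combining the two steps yields $\mathcal{D}_{L}(t) = t + \min(L, L) = L + t$, which is the claimed expression.

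I do not anticipate any genuine obstacle, since the entire analytic content --- both the achievability construction and the matching converse --- already lives in Theorem~\ref{theoSingleTXresult}; the corollary is purely a bookkeeping consequence. The single point worth an explicit sentence is the saturation phenomenon: once $C$ reaches $L$, feedback from further users is redundant, which is precisely the achievability intuition sketched among the contributions, where only $L$ users need supply CSI while the remaining benefiting users are served through subfiles they can ``cache out'' without any CSI-aided precoding. It is also worth emphasizing in the write-up that the value $L+t$ was previously known only to lie within a multiplicative factor of two of the optimum, whereas Theorem~\ref{theoSingleTXresult} now pins it down as \emph{exactly} optimal.
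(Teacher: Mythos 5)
Your argument is correct and matches the paper's (implicit) reasoning: the corollary is stated without a separate proof precisely because it is the immediate specialization of Theorem~\ref{theoSingleTXresult} obtained by removing the feedback cap, i.e., by monotonicity in $C$ and the saturation of $t+\min(L,C)$ at $C=L$, exactly as you write. The observations you add about redundancy of feedback beyond $L$ users and the tightening of the prior factor-of-two bound are consistent with the paper's own remarks following the corollary.
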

	\begin{remark}\label{needed_csit}
			The DoF performance $\mathcal{D}_{L}(t, C)=L+t$ can be achieved by knowing the CSIT of only $C=L$ users at each transmission.
	\end{remark}
	\begin{remark}
		As Theorems \ref{theoSingleTXresult},~\ref{theoMultiTXresult} demonstrate, in order to achieve the maximum one-shot linear DoF $\mathcal{D}_{L}(t, C) = L+t$ in the multi-antenna case ($L>1$), the condition $C\ge L$ is both sufficient and necessary.
	\end{remark}
	
	\begin{remark} In several scenarios such as in~\cite{piovanoCCparallelChannelsTComm2020,naderializadehCellularNetsTComm2019}, the best known bounds --- which are built on the converse proof of~\cite{7857805} --- endure a multiplicative gap to the optimal performance.
Our converse proof improves the converse in\cite{7857805} by tightening the lower bound of the solution to the linear program proposed in\cite{7857805}. Consequently, our converse also closes the multiplicative gap of such subsequent works.
For example, it follows directly from the results derived here that the achievable DoF presented in\cite{piovanoCCparallelChannelsTComm2020} for a cache-aided interference network with heterogeneous parallel channels and centralized cache placement is in fact exactly optimal. Similarly, the achievable DoF in~\cite{naderializadehCellularNetsTComm2019} for cache-aided cellular networks again turns out to be exactly optimal.
	\end{remark}

\subsection*{Intuition and an example of the scheme}

Revisiting the previous optimal multi-antenna coded caching algorithms (cf.~\cite{shariatpanahiMultiserverTransIT2016,7857805,shariatpanahi2017multi,tolli2017multi}) --- which as we noted, require CSIT from all $L+t$ ``active'' users --- we remark that the main premise of these designs is that each transmitted subfile can be cached-out by some $t$ users (as in the algorithm of \cite{maddah2014fundamental}), and at the same time it can be zero-forced at some other $L-1$ users. This, in turn, allows each of the $L+t$ active users to receive its desired subfile, free of interference. This design, while achieving the maximum DoF, incurs very high CSIT costs. Notably, these costs are associated with the need to eventually ``steer-away'' subfiles from every active user.

The idea that we follow is different. In order to reduce the amount of CSIT to be from only $L$ feedback-aided users, while retaining the full DoF performance, it follows that the $t$ users whose CSI is unknown (hereon referred to as the set $\pi$) will need to cache-out a total of $t+L-1$ subfiles each. On the other hand, the $L$ users whose CSI is known (hereon referred to as the set $\lambda$) will be assisted by precoding, and can thus more easily receive their desired subfile. Hence, the main design challenge is to transmit together subfiles that can be decoded by each user of set $\pi$.

We proceed with a preliminary description of the proposed algorithm.

\subsubsection*{Algorithm overview}

We first note that the cache placement draws directly from \cite{maddah2014fundamental}, both in terms of file partition, as well as in terms of storing of subfiles in the users' caches.

On the other hand, the XOR generation method will be fundamentally different. The first step is to construct XORs composed of $\frac{t}{L}+1$ subfiles and to compose each transmit-vector with $L$ such XORs. This allows each transmission to communicate $L+t$ different subfiles aimed at serving, simultaneously, a set of $L+t$ users. As discussed above, each such set of $L+t$ active users is partitioned into two sets; the first set, $\lambda$, consists of the $L$ users that are assisted by precoding. The second set, $\pi$, has $t$ users who are not assisted by precoding and who must compensate with their caches. The vector of XORs will be multiplied by $\mathbf{H}_{\lambda}^{-1}$ which represents the normalized inverse of the channel matrix between the transmitter and the users in set $\lambda$.

We will see that the above design guarantees that, during the decoding process, each of the users in $\lambda$ only receives one of the XORs (because the rest will be nulled-out by the precoder), while the remaining $t$ users, i.e., those in $\pi$, receive a linear combination of all $L$ XORs. Hence, this means that each user in $\lambda$ needs to cache out $\frac{t}{L}$ subfiles in order to decode its desired subfile, while the users in $\pi$ need to cache out $t+L-1$ subfiles, i.e., all but one subfiles.

\subsubsection*{Algorithm demonstration through an example}
Next, we will demonstrate a single transmission of our algorithm by considering the setting of Example~\ref{ex:HighCSI}. The goal is to achieve the same performance as before (delivery to four users at a time) while using CSIT from only two users at a time. The example in its entirety can be found in Section~\ref{sec:DoFperfCalculation}, Example~\ref{ex:fullSmall}.

\begin{ex}\label{ex:firstSmallScheme}
In the same MISO BC setting of Example~\ref{ex:HighCSI} with $L=2$ transmit antennas, $K=4$ users, and cumulative cache size $t=2$, one transmitted vector of the proposed algorithm takes the form\footnote{The reader is warned that there is a small notational discrepancy between the subfile indices of this example and the formal notation. In this example we have kept the notation as simple as possible in order to more easily provide a basic intuition on the structure of the scheme.}
\begin{align}
	\mathbf{x}=\mathbf{h}_{2}^{\perp}( A_{34}\oplus C_{14})+\mathbf{h}_{1}^{\perp}(B_{34}\oplus D_{23}),
\end{align}
where $\mathbf{h}_{k}^{\perp}$, $k\in\{1,2\}$, denotes the precoder-vector designed to be orthogonal to the channel of user $k$, files $A$, $B$, $C$, and $D$ are requested by users $1,2,3,$ and $4$, respectively, and where $W_{ij}$, $W\in\{A, B,C,D\}$, represents the subfile of $W$ that can be found in the caches of users~$i$ and~$j$.

Assuming that user $k$ receives $y_{k},~k\in [4]$,  the message at each user takes the form
\begin{align}
	\begin{bmatrix}
		y_{1} \\ y_{2} \\ y_{3} \\ y_{4}
	\end{bmatrix} \! = \! \begin{bmatrix}
		\mathbf{h}_{1}^{\dagger}(\mathbf{h}_{2}^{\perp} A_{34}\oplus C_{14}+\mathbf{h}_{1}^{\perp}B_{34}\oplus D_{23})\\
		\mathbf{h}_{2}^{\dagger}(\mathbf{h}_{2}^{\perp} A_{34}\oplus C_{14}+\mathbf{h}_{1}^{\perp}B_{34}\oplus D_{23})\\				\mathbf{h}_{3}^{\dagger}(\mathbf{h}_{2}^{\perp} A_{34}\oplus C_{14}+\mathbf{h}_{1}^{\perp}B_{34}\oplus D_{23})\\
		\mathbf{h}_{4}^{\dagger}(\mathbf{h}_{2}^{\perp} A_{34}\oplus C_{14}+\mathbf{h}_{1}^{\perp}B_{34}\oplus D_{23})
	\end{bmatrix}
	\nonumber
	\\
	\!=\!
	\begin{bmatrix}
		A_{34}\oplus C_{14}\\
		B_{34}\oplus D_{23}\\			
		\mathbf{h}_{3}^{\dagger}(\mathbf{h}_{2}^{\perp} A_{34}\oplus C_{14}+\mathbf{h}_{1}^{\perp}B_{34}\oplus D_{23})\\
		\mathbf{h}_{4}^{\dagger}(\mathbf{h}_{2}^{\perp} A_{34}\oplus C_{14}+\mathbf{h}_{1}^{\perp}B_{34}\oplus D_{23})
	\end{bmatrix}
\end{align}
where we have ignored noise for simplicity.

Hence, we see that users $1$ and $2$ only receive the first and second XOR, respectively, due to the precoder design. This means that each of these two users can decode its desired subfiles, $A_{34}$ and $B_{34}$, respectively, by caching-out the unwanted subfiles $C_{14}$ and $D_{23}$, respectively.

On the other hand, looking at the decoding process for users 3 and 4, we see that user $3$ can cache-out subfiles $A_{34}$, $B_{34}$, and $D_{23}$ in order to decode the desired $C_{14}$. Similarly, user $4$ can cache-out subfiles $A_{34}$, $B_{34}$, and $C_{14}$ to decode the desired subfile $D_{23}$. In order to achieve this, users $3$ and $4$ need to employ their cached content, but they also need some CSI knowledge: user 3 needs products $\mathbf{h}_{3}^{\dagger}\mathbf{h}_{2}^{\perp}$ and $\mathbf{h}_{3}^{\dagger}\mathbf{h}_{1}^{\perp}$, while user 4 needs $\mathbf{h}_{4}^{\dagger}\mathbf{h}_{2}^{\perp}$ and $\mathbf{h}_{4}^{\dagger}\mathbf{h}_{1}^{\perp}$. This can be handled with the broadcasting of information for only two precoders. The reader is referred to Appendix~\ref{sec:csi} for an exposition of how the feedback acquisition here requires only $L=2$ training slots, which is simply because information on a precoder can be broadcasted in a single shot, irrespective of how many users it is broadcasted to.
\end{ex}

\section{Description of the Scheme \label{sec:SchemeDescription}}
We proceed to present the scheme's cache-placement and content-delivery phases. We focus on the single transmitter MISO BC setting, while the multi-transmitter scenario is presented in Section~\ref{app:achiev_multiTX}.
Furthermore, we also assume in the following that $C= L$, noting that the extension of the scheme to the case $C<L$ is trivial and it can be achieved by simply ``shutting down'' $L-C$ antennas. The scheme is described for the case where $\frac{t}{L}\in\mathbb{N}$, while the remaining cases can be achieved using memory sharing and, as shown in \cite{lampiris2018adding}, would incur a small DoF reduction\footnote{Efforts subsequent to our work \cite{lampiris2018full,lampirisBridgingAsilomar2019} have addressed this memory sharing issue through a new design that is able to retain the same desirable DoF and feedback cost without being constrained by the value of $L$.}.

Communication happens in two phases, namely the placement and the delivery phases. The placement phase is responsible for populating the caches of the users with content, while the delivery phase is responsible for communicating to the users their desired files. Further, we assume that each transmission occupies multiple coherence periods.

\paragraph*{Precoder design}For some set $\lambda\subset[K]$ of $|\lambda|=L$ users, we denote with $\mathbf{H}_{\lambda}^{-1}$ the normalized inverse of the $L\times L$ channel matrix $\mathbf{H}_{\lambda}$ corresponding to the channel between the transmitter and the $L$ users of set $\lambda$. Further, the $\ell$-th column of $\mathbf{H}_{\lambda}^{-1}$, $\ell\in [L]$, is denoted by $\mathbf{h}^{\perp}_{\lambda \setminus \lambda(\ell)}$ and describes a vector that is orthogonal to the channels of the users of set $\lambda \setminus \lambda(\ell)$. Hence, for some arbitrary user $k\in[K]$, it holds that
\begin{align}\label{eqPrecoderDesign}
	\mathbf{h}_{k}^{\dagger}\cdot\mathbf{h}^{\perp}_{\lambda \setminus \lambda(\ell)}
	\begin{cases}
		= 0, & \text{if } k \in \lambda \setminus \lambda(\ell)\\
		\neq	0, & \text{else. }
	\end{cases}
\end{align}

\subsection{Placement phase}
The placement phase is executed without knowledge of the number of transmit antennas, and without knowledge of CSI. The placement follows the original scheme in \cite{maddah2014fundamental} where each file $W^{(n)},n\in[N]$, is initially split into $\binom{K}{t}$ subfiles
\begin{equation}
	W^{(n)} \to \left\{ W_{\tau}^{(n)} , \ \ \tau \subset[K],\ |\tau|= t\right\},
\end{equation}
each indexed by a $t$-length set $\tau \subset[K]$, such that the cache of user $k\in[K]$ takes the form
\begin{equation}\label{eq:initialSubpacketizaion}
	\mathcal{Z}_{k}=\left\{W_{\tau}^{(n)}: \forall \tau\ni k,  |\tau| = t, \forall n\in[N]\right\}.
\end{equation}

\subsection{Delivery phase}

This phase begins with the request from each user of a single file from the library. To satisfy these demands, the transmitter selects a subset of $L+t$ users for each transmission slot. Specifically, these users are divided into set $\lambda\subset[K], \ |\lambda|= L$, who provide CSI, and set $\pi\subset[K]\setminus\lambda,\ |\pi| = t$, who need not provide CSI.

Upon notification of the requests $\{W^{(d_k)}, \ k\in[K]\}$, and after the number of antennas is revealed to be $L$,
each requested subfile $W_{\tau}^{(d_{k})}$ is further split twice as follows:
\begin{align}\label{eq:sigmaSubpacketization}
	W_{\tau}^{(d_{k})} \to&\{ W^{(d_{k})}_{\sigma,\tau},~
	\sigma\subseteq[K]\setminus(\tau\cup\{k\}), |\sigma|=L-1\} \\ \label{eq:sigmaSubpacketizationPhi}
	W_{\sigma,\tau}^{(d_{k})}\to&\{ W^{r,(d_{k})}_{\sigma, \tau}	, ~  r\in[L+t]\}.
\end{align}

In the following we describe how, for every transmission, the transmitter first creates a vector of $L$ XORs, and then precodes each XOR with the appropriate precoder.
\paragraph{Individual XOR design}
As previously mentioned, each transmitted XOR has $t/L+1$ recipients, which we refer to as set $\mu$. We recall that each subfile is cached at $t$ receivers, and we consider the set $\nu$ to be the set of $t- (t/L+1) = t\frac{L-1}{L}$ users who have cached the set of files intended for users in set $\mu$. In particular, these two sets $\mu,\nu\subset[K]$, are disjoint ($\mu\cap\nu=\emptyset$), and their sizes are $|\mu|=\frac{t}{L}+1$ and ~$|\nu|=t\frac{L-1}{L}$ respectively. We also consider a set $\sigma\subseteq\Big([K]\setminus(\mu\cup\nu)\Big), ~ |\sigma|=L-1$, which will be later chosen more carefully. With these in place, we construct XOR\footnote{In a small abuse of notation, we will henceforth refer to the segments of the original subfiles again as subfiles. We also note that, for clarity of exposition and to avoid many indices, index $r$ of~\eqref{eq:sigmaSubpacketizationPhi} will henceforth be suppressed, thus any $ W^{r,(d_{k})}_{\sigma, \tau}$ will be denoted as $W_{\sigma,\tau}^{(d_{k})}$ unless $r$ is explicitly needed.}
\begin{equation}\label{eq:XORexplanation}
	X_{\mu}^{\nu, \sigma}=\bigoplus_{k\in\mu} W^{(d_{k})}_{\sigma,(\nu\cup \mu)\setminus \{k\}}
\end{equation}
which consists of $\frac{t}{L}+1$ subfiles, where
\begin{itemize}
\item each subfile in the XOR is requested by one user in $\mu$, and where
\item all subfiles of the XOR are known by all users in $\nu$.
\end{itemize}
The set $(\nu\cup \mu)\setminus \{k\}$ plays the role of $\tau$ from the placement phase, as it describes the set of users that have this subfile (labeled by $\tau$) in their cache, while set $\sigma$ is a selected subset of $L-1$ users from set $\lambda$.

\begin{ex}
Let us consider the MISO BC with $L=2$ transmit antennas, $K\ge 6$ users and cumulative cache size $t = 4$.
Let the aforementioned sets be $\mu = \{1,2,3\}$, $\nu = \{4,5\}$, and consider some arbitrary $\sigma\subseteq [K]\setminus\{1,2,3,4,5\}$, $|\sigma|=1$. Then, the XOR of~\eqref{eq:XORexplanation} takes the form
\begin{align}
	X_{123}^{45,\sigma}=W_{\sigma,{\scriptsize \underbrace{ 2345}_\tau} }^{(d_{1})}\oplus W_{\sigma,1345}^{(d_{2})}\oplus W_{\sigma,1245}^{(d_{3})}.
\end{align}
As we have described before, this XOR delivers subfiles desired by all the users of set $\mu$, while each element of the XOR is cached at all users of set $\nu$.
It is easy to see that users $1,2$, and $3$ work in the traditional way to cache out the interfering subfiles in order to get their own desired subfile, such that for example user~1 caches out $W_{\sigma,1345}^{(d_{2})}\oplus W_{\sigma,1245}^{(d_{3})}$ to get its own $W_{\sigma,2345}^{(d_{1})}$.  In turn, users $4$ and $5$ are fully protected against this entire undesired XOR because they have cached all 3 subfiles of this XOR.
As a quick verification, we see that each index $\tau$ has size $|\tau| = t = 4$, which adheres to the available cache-size constraint as each file can be stored at exactly $t=4$ receivers.
\end{ex}

\paragraph{Design of vector of XORs}
\begin{algorithm}[h]\caption{Delivery Phase}\label{alg:Delivery}
 \For{$\lambda\subset[K], |\lambda|=L$ (precoded users in $\lambda$)}{
	Calculate $\mathbf{H}^{-1}_{\lambda}$ \\
	\For{$\pi\subseteq \left([K]\setminus \lambda\right),\  |\pi|=t$}{
		Break $\pi$ into some $\phi_{i}\ i \in [L]:\ |\phi_{i}|=\frac{t}{L},$ $\bigcup_{i\in[L]} \phi_{i}=\pi, \   \phi_{i}\cap \phi_{j}=\emptyset,\forall i,j\in [L]$\\
		\For{$s\in \{0,1,...,L-1\}$}{
			$v_i=((s+i-1)\mod L)+1, i\in[L]$\\
			Transmit
		\begin{align}\label{eq:transmission}
		\mathbf{x}_{\lambda,\pi}^{s}=\mathlarger{\mathbf{H}}_{\lambda}^{-1}\cdot
			\begingroup
			\renewcommand*{\arraystretch}{2}
				\begin{bmatrix}
					\mathlarger{\mathlarger{X}}_{\lambda(1)\cup \phi_{v_{1}}}^{\pi\setminus \phi_{v_{1}},\lambda\setminus\lambda(1)}\\
					\mathlarger{\mathlarger{X}}_{\lambda(2)\cup \phi_{v_{2}}}^{\pi\setminus \phi_{v_{2}},\lambda\setminus\lambda(2)}\\
					\vdots\\
					\mathlarger{\mathlarger{X}}_{\lambda(L)\cup \phi_{v_{L}}}^{\pi\setminus \phi_{v_{L}},\lambda\setminus\lambda(L)}
				\end{bmatrix}
				.
			\endgroup
    		\end{align}
		
		}
	}
}
\end{algorithm}

Equipped with the design of each individual XOR, the goal is to select $L$ such XORs in order to communicate them in a single transmission period. Algorithm~\ref{alg:Delivery} forms a set of $L+t$ users and a set of $L$ distinct such XORs to serve them with. Specifically, the steps that are followed are described below.
\begin{itemize}
\item In Step 1, a set $\lambda$ of $L$ users is chosen.
\item In Step 2, a (ZF-type) precoder $\mathbf{H}_{\lambda}^{-1}$ is designed to spatially separate the $L$ users in $\lambda$.
\item In Step 3, another set $\pi \subseteq [K]\setminus \lambda$ of $t$ users is selected from the remaining users.
\end{itemize}
To construct the $L$ XORs and to properly place them in the vector, the following steps take place.
\begin{itemize}
\item In Step 4, set $\pi$ of $t$ users is arbitrarily partitioned into $L$ non-overlapping sets $\phi_{i},\ i\in [L]$, each having $\frac{t}{L}$ users.

\item Steps $5$ and $6$ are responsible for forming the $L$ different sets $\mu$ (cf. \eqref{eq:XORexplanation}), where each such set $\mu$ consists of $\frac{t}{L}+1$ users. Specifically, in every iteration of Step~$5$, the algorithm associates a user from set $\lambda$ with some set $\phi_{v_{i}}$, in order to form set $\mu$ and such that after $L$ iterations each user from $\lambda$ would be associated with every set $\phi_{v_i}$. For example, when $s=0$, the first XOR of the vector will be intended for users in set $\{\lambda(1)\}\cup\phi_{1}$ (while completely known by all users in $\pi\setminus\phi_{1}$), the second XOR will be intended for the users in the set $\{\lambda(2)\}\cup\phi_{2}$ (while completely known by all users in $\pi\setminus\phi_{2}$), and so on. Further, when $s=1$ the first XOR will be intended for users in $\{\lambda(1)\}\cup\phi_{2}$ (while completely known by all users in $\pi\setminus\phi_{2}$), the second XOR will be for users in $\{\lambda(2)\}\cup\phi_{3}$ (while completely known by all users in $\pi\setminus\phi_{3}$), and so on. In particular, Step $5$ (and the operation in Step $6$, as shown in~Algorithm~\ref{alg:Delivery}) allows us to iterate over all sets $\phi_{i}$, associating every time a distinct set $\phi_{i}$ to a distinct user from group $\lambda$, until all users from set $\lambda$ have been associated with all sets $\phi_{i}$. The verification that this association does not leave behind any subfiles is performed later on in this section.
\item Then, in the last step (Step $7$), the vector of the $L$ XORs
is transmitted after being precoded by matrix $\mathbf{H}_{\lambda}^{-1}$.

\end{itemize}

\paragraph{Decoding at the users}
By the very nature of the XOR design, as seen in~\eqref{eq:XORexplanation}), the vector of XORs we constructed in~\eqref{eq:transmission} guarantees that the users in $\lambda$ can decode the single XOR that they receive (recall that for such users, all other XORs are steered away due to ZF precoding) and can thus subsequently proceed to decode their own file through the use of their cached content. Further, the design guarantees that each user in $\pi$ has cached all subfiles that are found in the entire vector, apart from its desired subfile. Benefitting from their receiver-side CSI (see Appendix~\ref{sec:csi}), the users of set $\pi$ are provided with all the necessary CSI estimates, which allows for the decoding of the linear combination of the transmitted vector.

To see the above more clearly, let us look at the signal received, and the subsequent decoding process at some of the users.

For some user $\ell\in\lambda$, the decoding process is simple. The received message takes the form
\begin{align*}
	y_{\ell}&=\mathbf{h}^{\dagger}_{\ell} \mathbf{H}_{\lambda}^{-1}
\begingroup
			\renewcommand*{\arraystretch}{2}
				\begin{bmatrix}
					\mathlarger{\mathlarger{X}}_{\lambda(1)\cup \phi_{v_{1}}}^{\pi\setminus \phi_{v_{1}},\lambda\setminus\lambda(1)}\\
					\mathlarger{\mathlarger{X}}_{\lambda(2)\cup \phi_{v_{2}}}^{\pi\setminus \phi_{v_{2}},\lambda\setminus\lambda(2)}\\
					\vdots\\
					\mathlarger{\mathlarger{X}}_{\lambda(L)\cup \phi_{v_{L}}}^{\pi\setminus \phi_{v_{L}},\lambda\setminus\lambda(L)}
				\end{bmatrix}
			\endgroup
		=\mathlarger{\mathlarger{X}}_{\{\ell\}\cup \phi_{v_{k}}}^{\pi\setminus \phi_{v_{k}},\lambda\setminus\{\ell\}} \label{eqRemainingXOR1}
		,
\end{align*}
where $\phi_{v_{k}}$, $k\in[L]$, represents the subset of $\pi$, of size $|\phi_{v_{k}}|=\frac{t}{L}$, associated with $\ell$ (Step $5$ of Algorithm~\ref{alg:Delivery}). The selected precoders allow user $\ell$ to receive only one of the XORs (cf.~\eqref{eqRemainingXOR1}). Due to the design of this remaining XOR (see~\eqref{eq:XORexplanation}), all but one subfiles have been cached by user $\ell$, and thus the user can decode its desired subfile.

On the other hand, the decoding process at some user in set $\pi$ requires, also, access to CSI. The received message at user $p\in\pi$ takes the form
\begin{align}
	y_{p}&=\mathbf{h}^{\dagger}_{p} \mathbf{H}_{\lambda}^{-1}
\begingroup
			\renewcommand*{\arraystretch}{2}
				\begin{bmatrix}
					\mathlarger{\mathlarger{X}}_{\lambda(1)\cup \phi_{v_{1}}}^{\pi\setminus \phi_{v_{1}},\lambda\setminus\lambda(1)}\\
					\mathlarger{\mathlarger{X}}_{\lambda(2)\cup \phi_{v_{2}}}^{\pi\setminus \phi_{v_{2}},\lambda\setminus\lambda(2)}\\
					\vdots\\
					\mathlarger{\mathlarger{X}}_{\lambda(L)\cup \phi_{v_{L}}}^{\pi\setminus \phi_{v_{L}},\lambda\setminus\lambda(L)}
				\end{bmatrix}
			\endgroup
\\ &
		=\sum_{j=1}^{L} \mathbf{h}^{\dagger}_{p}\mathbf{h}^{\perp}_{\lambda\setminus\lambda(j)}	\mathlarger{\mathlarger{X}}_{\lambda(j)\cup \phi_{v_{j}}}^{\pi\setminus \phi_{v_{j}},\lambda\setminus\lambda(j)} \label{eq:decodingProcessInPi}
		.
\end{align}

First, we observe that, due to the process described in Appendix~\ref{sec:csi}, user~$p$ has estimated all products $\mathbf{h}^{\dagger}_{p}\mathbf{h}^{\perp}_{\lambda\setminus\{\ell\}}, ~\forall \ell\in\lambda$, that appear in \eqref{eq:decodingProcessInPi}. Then, by taking account of the fact that $\phi_{v_{i}}\cap\phi_{v_{j}} =\emptyset$ if $i\ne j$, we can see that user~$p$ belongs to one of the sets $\phi_{v_{j}}\subset\pi$. This means that user~$p$ has stored the content of all but one XORs (see~\eqref{eq:XORexplanation}) and can thus remove them from~\eqref{eq:decodingProcessInPi}. By removing the $L-1$ known XORs, the remaining message at user~$p$ is
\begin{equation}
	\mathbf{h}^{\dagger}_{p}\mathbf{h}^{\perp}_{\lambda\setminus\lambda(j)}	\mathlarger{\mathlarger{X}}_{\lambda(j)\cup \phi_{v_{j}}}^{\pi\setminus \phi_{v_{j}},\lambda\setminus\lambda(j)}
\end{equation}
where $\phi_{j} \ni p$. Due to its structure (cf.~\eqref{eq:XORexplanation}), the XOR can be successfully used by user $p$ to decode its own desired message.

\subsection{Evaluating the scheme's performance}\label{sec:DoFperfCalculation}

In order to calculate the achievable DoF of the proposed scheme, we begin by showing that each desired subfile of set $\{W^{r,(d_{k})}_{\sigma, \tau}\}_{r=1}^{L+t}$ is transmitted exactly once. Since each such collection of subfiles has the same sub-indices, it follows that there is no need to distinguish between them, as long as each appears exactly once.

\paragraph{Each desired subfile is transmitted exactly once}

For any arbitrary subfile $W_{\sigma,\tau}^{(d_{k})}$, the labeling $(\sigma, \tau, k)$ defines the set of active users $\lambda \cup \pi = \sigma\cup \tau \cup \{k\}$. Let us recall that $\lambda\cap \pi = \emptyset$, $\sigma\cap \tau = \emptyset$, that $\sigma\subset \lambda$, and that $|\sigma| = L-1$, $|\lambda| = L$, $|\pi| = |\tau| = t$.
For our fixed $\sigma,\tau,k$, let us consider the two complementary cases; case i) $k\in\lambda$, and case ii) $k\notin \lambda$.

In case i), $\lambda= \sigma\cup \{k\}$, {since $\tau\cap \lambda = \emptyset$.} 
Moreover,
\begin{align*}
	\pi = (\sigma\cup \tau \cup \{ k \}) \setminus \lambda = \tau
\end{align*}
means that a fixed $(\sigma, \tau, k)$ corresponds to a single $(\lambda,\pi)$. For any fixed $(\lambda,\pi)$ in Algorithm~\ref{alg:Delivery}, Step $5$ iterates $L$ times, thus identifying $L$ specific component subfiles which are defined by the same $(\sigma, \tau, k)$, and thus can be differentiated by $L$ different $r \in [t+L]$; these $L$ component subfiles of $W_{\sigma,\tau}^{(d_{k})}$ will appear in transmissions
$\mathbf{x}_{\lambda,\pi}^{s},~s=0,1,\dots,L-1$.

In case ii), the fact that $k\notin \lambda$ implies that for a given $(\sigma,\tau,k)$ (which also defines the set of active users) there can be $t$ different sets $\lambda$ which take the form
\begin{align*}
	\lambda = \sigma \cup \{ \tau(i) \}, ~~ i\in [t].
\end{align*}
This means that any fixed triplet $(\sigma,\tau, k)$ corresponds to $t$ different possible sets $\lambda$. Since for a fixed $(\sigma,\tau, k)$, the union of $\lambda\cup \pi$ is fixed, we can conclude that each fixed $(\sigma,\tau, k)$ is associated to $t$ different pairs $(\lambda,\pi)$.

Now, having chosen a specific pair $(\lambda,\pi)$, where we remind that $k\in\pi$, we can see from Step 5 of Algorithm \ref{alg:Delivery} that user $k$ should belong to exactly one set $\phi_{v_{i}},i\in[L]$. Let that set be $\phi_{v_{j}}$. This means that from all $L$ transmissions of Step 5, a component subfile of the form $W^{(d_{k})}_{\sigma,\tau}$ will be transmitted in exactly one transmission, and in particular, in the single transmission which includes XOR
\begin{equation*}
{\mathlarger{X}}_{\tau(i)\cup \phi_{v_{j}}}^{\pi\setminus \phi_{v_{j}},\sigma}.
\end{equation*}
In total, for all the different $(\lambda,\pi)$ sets, subfile $W^{(d_{k})}_{\sigma,\tau}$ will be transmitted $L+t $ times.
Finally, since we showed that an arbitrary subfile, $W^{(d_{k})}_{\sigma,\tau}$, will be transmitted exactly $L+t$ times, this implies that all subfiles of interest will be transmitted once we go over all possible $\lambda, \pi$ sets.

\paragraph{DoF calculation}
The resulting DoF can now easily be seen to be $L+t$, simply because each transmission includes $L+t$ different subfiles, and because each file was indeed transmitted exactly once. A quick verification, accounting for the subpacketization
\begin{align*}
	\mathcal{S}_{L}=\binom{K}{t}\binom{K-t-1}{L-1}(L+t),
\end{align*}
and accounting for the number of iterations in each step, tells us that the worst-case delivery time takes the form
\begin{align}\label{eq:timeCalc}
\mathcal{T}_{L}(t)=\frac{\overbrace{\binom{K}{L}}^{\text{Step }1}\overbrace{\binom{K-L}{t}}^{\text{Step } 3}\cdot \overbrace{L}^{\text{Step }5}}{\binom{K}{t}\binom{K-t-1}{L-1}(L+t)} =\!\frac{K-t}{L+t},
\end{align}
which in turn directly implies a DoF of
\begin{align*}
	\mathcal{D}_{L}(t) = \frac{K(1-\gamma)}{\mathcal{T}_{L}(t) }= L+t
\end{align*}
which is achieved with CSI from only $C=L$ users per transmission. \qed

To illustrate the above algorithm, we proceed to present the delivery phase for the setting of Example~\ref{ex:firstSmallScheme}.

\begin{ex}[Example of scheme]\label{ex:fullSmall}
Consider a transmitter with $L=2$ antennas, serving $K=4$ users with cumulative cache size $t=2$.
Each file is split into
\begin{align*}
	\mathcal{S}_{L}=\overbrace{(t+L)}^{r}\overbrace{\binom{K-t-1}{L-1}}^{\sigma}\overbrace{\binom{K}{t}}^{\tau}=24
\end{align*}
subfiles. The $\binom{K}{L} \binom{K-L}{t} L = 12 $ transmissions that satisfy all the users' requests are
\begin{align*}
	&\mathbf{x}_{12,34}^{1}\!=\!\mathbf{H}_{12}^{-1}
	\begin{bmatrix}
		A_{2,34}^{(1)}\!\oplus\! C_{2,14}^{(1)}\\
		B_{1,34}^{(1)}\!\oplus\! D_{1,23}^{(1)}
	\end{bmatrix}
	,
	\mathbf{x}_{12,34}^{2}\!=\!\mathbf{H}_{12}^{-1}
	\begin{bmatrix}
		A_{2,34}^{(2)}\!\oplus\! D_{2,13}^{(1)}\\
		B_{1,34}^{(2)}\!\oplus\! C_{1,24}^{(1)}
	\end{bmatrix}
	\\
	&\mathbf{x}_{34,12}^{1}\!=\!\mathbf{H}_{34}^{-1}
	\begin{bmatrix}
		B_{4,13}^{(1)}\!\oplus\! C_{4,12}^{(1)}\\
		A_{3,24}^{(1)}\!\oplus\! D_{3,12}^{(1)}
	\end{bmatrix}
	,
	\mathbf{x}_{34,12}^{2}\!=\!\mathbf{H}_{34}^{-1}
	\begin{bmatrix}
		A_{4,23}^{(1)}\!\oplus\! C_{4,12}^{(2)}\\
		B_{3,14}^{(1)}\!\oplus\! D_{3,12}^{(2)}
	\end{bmatrix}
	\\
	&\mathbf{x}_{24,13}^{1}\!=\!\mathbf{H}_{24}^{-1}
	\begin{bmatrix}
		A_{4,23}^{(2)}\!\oplus\! B_{4,13}^{(2)}\\
		C_{2,14}^{(2)}\!\oplus\! D_{2,13}^{(2)}
	\end{bmatrix}
	,
	\mathbf{x}_{24,13}^{2}\!=\!\mathbf{H}_{24}^{-1}
	\begin{bmatrix}
		B_{4,13}^{(3)}\!\oplus\! C_{4,12}^{(3)}\\
		A_{2,34}^{(3)}\!\oplus\! D_{2,13}^{(3)}
	\end{bmatrix}
	\\
	&\mathbf{x}_{13,24}^{1}\!=\!\mathbf{H}_{13}^{-1}
	\begin{bmatrix}
		A_{3,24}^{(2)}\!\oplus\! B_{3,14}^{(2)}\\
		C_{1,24}^{(2)}\!\oplus\! D_{1,23}^{(2)}
	\end{bmatrix}
	,
	\mathbf{x}_{13,24}^{2}\!=\!\mathbf{H}_{13}^{-1}
	\begin{bmatrix}
		A_{3,24}^{(3)}\!\oplus\! D_{3,12}^{(2)}\\
		B_{1,34}^{(3)}\!\oplus\! C_{1,24}^{(3)}
	\end{bmatrix}
	\\
	&\mathbf{x}_{14,23}^{1}\!=\!\mathbf{H}_{14}^{-1}
	\begin{bmatrix}
		A_{4,23}^{(3)}\!\oplus\! B_{4,13}^{(4)}\\
		D_{1,23}^{(3)}\!\oplus\! C_{1,24}^{(4)}
	\end{bmatrix}
	,
	\mathbf{x}_{14,23}^{2}\!=\!\mathbf{H}_{14}^{-1}
	\begin{bmatrix}
		A_{4,23}^{(4)}\!\oplus\! C_{4,12}^{(4)}\\
		B_{1,34}^{(4)}\!\oplus\! D_{1,23}^{(4)}
	\end{bmatrix}
	\\
	&\mathbf{x}_{23,14}^{1}\!=\!\mathbf{H}_{23}^{-1}
	\begin{bmatrix}
		A_{3,24}^{(4)}\!\oplus\! B_{3,14}^{(3)}\\
		C_{2,14}^{(3)}\!\oplus\! D_{2,13}^{(4)}
	\end{bmatrix}
	,
	\mathbf{x}_{23,14}^{2}\!=\!\mathbf{H}_{23}^{-1}
	\begin{bmatrix}
		B_{3,14}^{(4)}\!\oplus\! D_{3,12}^{(4)}\\
		C_{2,14}^{(4)}\!\oplus\! A_{2,34}^{(4)}
	\end{bmatrix}\!\!.
\end{align*}

As we see, the delay is $\mathcal{T}_{2} = \frac{12}{24}=\frac{1}{2}$ and the DoF is $\mathcal{D}_{2} = \frac{K(1-\gamma)}{{T}_{2}} =4$. This performance is optimal.

\end{ex}


\section{Extension to the Multi-Transmitter Environment}\label{app:achiev_multiTX}
	
We now consider the multiple-transmitter case, where each of the $K_{T}$ transmitters is equipped with $L_{T}\ge 1$ antennas, and each has a cache capacity equal to a fraction $\gamma_{T}\in[ 1/K_T ,1]$ of the library, such that the transmitter-side cumulative cache size is $t_{T} =  K_{T}\gamma_T$. As we have seen, setting $L = L_{T} t_{T}$ shows how the two settings (the cache-aided MISO BC, and the corresponding multi-transmitter equivalent) share the same DoF performance $\mathcal{D}_{L}(t, C) = t+C$, irrespective of the feedback capabilities $C$.
	
The scheme for the multi-transmitter setting closely resembles the scheme in Algorithm~\ref{alg:Delivery}, with the difference being that precoding vectors $ \mathbf{h}_{\lambda\setminus\{\lambda(\ell)\}}^{\perp}$ are formed in a distributed manner. In particular, for each transmitted subfile, the $t_{T}$ transmitters who have access to that subfile must cooperate to form (each using its own $L_{T}$ antennas) a distributed precoder vector of length $L$, which possesses the attributes described in \eqref{eqPrecoderDesign}. The only modification to Algorithm~\ref{alg:Delivery} is in the precoder design (Step~$2$) where the transmission vector (cf.~\eqref{eq:transmission}) now takes the form
	\begin{align}
		\mathbf{x}_{\lambda,\pi}^{s} = \sum_{\ell=1}^{L}  \sum_{k\in \{ \lambda(\ell)\}\cup\phi_{u_{\ell}}}   \mathbf{h}_{\lambda\setminus\{\lambda(\ell)\}}^{\perp} W^{(d_{k})}_{     \{\lambda(\ell) \} \cup\pi\setminus\{k\}}.
	\end{align}
It is important to notice that for a specific $\ell\in[L]$, the respective precoder vector $\mathbf{h}^{\perp}_{\lambda\setminus\{\lambda(\ell)\}}$ is designed at the $t_T  = \frac{L}{L_{T}}$ transmitters which have stored subfile $W^{(d_{k})}_{     \{\lambda(\ell) \} \cup\pi\setminus\{k\}	}$. This further means that the precoding vectors $\mathbf{h}_{\lambda\setminus\{\lambda(\ell)\}}^{\perp}$ are subfile-dependent and thus potentially different.
	
	\paragraph*{Placement at the transmitters} To guarantee that each subfile is stored at exactly $t_T$ transmitters, we use the approach of~\cite{lampiris2018adding} which does not require an increase of the subpacketization, and which we include here for completeness. The placement algorithm starts from the first transmitter and caches the first $M_{T} = \gamma_T N$ files in their entirety, while the second transmitter caches the next set of $M_{T}$ files, and so on. Specifically, transmitter $k_{T}\in [K_{T}]$ caches
	\begin{align}\label{eqTxPlacement}
		\mathcal{Z}_{{k_{T}}}^{\text{Tx}} = \big\{ W^{(n)},\ \ n\in\{1+(k_{T}-1)M_{T}, ..., k_{T}M_{T} \}\big\},
	\end{align}
where we notice that the index of each file is calculated using the modulo operation, i.e., each file index $n\in[N]$ appearing in \eqref{eqTxPlacement} takes the form $n =  (n-1) \mod ( N) +1$.
All the other steps remain the same.

		\section{Converse}\label{secConverse}

	In this section, we prove the converse part of Theorem~\ref{theoMultiTXresult}, corresponding to the multi-transmitter environment, and, by extension, the converse part of Theorem~\ref{theoSingleTXresult}, which can be deduced by setting the problem parameters as $K_T = 1$, $L_{T}=L$, and $t_{T} = 1$.

	The bound draws partly from\cite{7857805}, mainly for the initial steps, but we introduce new ideas which allow us to capture the CSI-availability effect as well as introduce a new bounding solution for the optimization problem that directly tightens the converse. 	
	Similarly to \cite{7857805}, we are constrained to i) placement done under the assumption of uncoded prefetching, and ii) linear delivery schemes that have the one-shot property, where no data is transmitted more than once.
	
	Specifically, the steps that we implement to prove the converse part of Theorem~\ref{theoMultiTXresult} are as follows:
	\begin{enumerate}
		\item We bound the number of messages that can be simultaneously transmitted under feedback constraints.
		\item We rewrite the problem as an integer optimization problem that seeks to minimize the delivery time for a given prefetching policy and file demand vector.
		\item We obtain a novel solution of the optimization problem by leveraging the cache-size constraints and the convexity of the problem.
	\end{enumerate}
The second step, i.e., the formulation of the optimization problem, follows from\cite[Sections V.B, V.C]{7857805}, and we include it here for completeness.
On the other hand, the novelty lies on the first and the third steps, which are instrumental in obtaining the converse.

We begin by introducing some additional definitions and notation.
In the following, we consider a slightly different channel model with respect to the one described in Section~\ref{se:sys_mod} for the achievable scheme.
Let us remark that these modifications do not impact our results, and indeed they are irrelevant for the description of the achievable scheme.
On this basis, we have omitted these considerations before for the sake of clarity, and they are incorporated only in this section.

			\subsection{Preliminary definitions}\label{se:converse_0}
			We denote the superset of all the sets of caches at the users as $\zeta^{\text{Rx}}$, such that $\zeta^{\text{Rx}}\triangleq \{\Zc_1,\dotsc, \Zc_K\}$. Similarly, the superset of cached content stored at the transmitters is denoted by $\zeta^{\text{Tx}}\triangleq \{\Zc_1^{{\text{Tx}}},\dotsc,\Zc_{K_T}^{\text{Tx}}\}$.
			 We consider that every file $W\expn$ in the library $\mathcal{F}$ is divided into $F$ packets, $\{W^{(n),f}\}_{f=1}^{F}$, each of size $B$ bits.
			The caching is done at the level of packets and we do not allow breaking the packets into smaller sub-packets\footnote{
			Packets are considered to be the atomic unit of size in place of bits, such that they are big enough for the laws of Shannon to apply and the probability of decoding error to vanish as $B$ increases\cite{7857805,piovanoGDoFMISO2019TransIT}.
			{Regarding the description of the achievable scheme in Section~\ref{sec:SchemeDescription}, it can be assumed w.l.o.g. that $F$ is an integer multiple of the number of subfiles.}
			}.
			As is standard, we consider that the transmitters encode each packet $W^{(d_k),f}$ into a coded packet\footnote{Here, ``coded packet'' refers to the channel coding strategy. This notation should not be confused with coded prefetching, which is not considered in this paper.} $\tilde{W}^{(d_k)}_{s} \triangleq g(W^{(d_k)}_{s})$ of $\tilde{B}$ complex symbols using a random Gaussian coding scheme $g : \Fb_2^B \rightarrow \Cb^{\tilde{B}}$ of rate $\log P + o(\log P)$.
			We introduce in the following some definitions that are instrumental to the proof.
			
				\begin{definition}[Communication Block]\label{eq:remark_udt_block}
					A Communication Block is defined as the time required to transmit a packet --- which has size equal to the atomic unit --- to a single user, in the absence of caching and of interference. A block consists of $\frac{\tilde{B}}{\log P}$ time instants.
				\end{definition}
			Hence, for a certain demand vector $\bfd$, we consider that the transmission lasts for a set $\beta$ of communication blocks, where each block $b\in\beta$ has duration $\tilde{B}$ time slots.
			During a given communication block $b$, the transmitters send a set of packets, denoted as $\rho_{b}$, to a subset of users $\kappa_{b}\subseteq[K]$ such that every user in $\kappa_b$ desires only one packet from $\rho_b$.
			The file requested by user~$k$ is denoted as $W^{(d_k)}$, and the specific packet  of $W^{(d_k)}$ that is transmitted in this communication block~$b$ is denoted as $W^{(d_k),f_{k}}$. 
 			Note that, for sake of readability, we omit the reference to the specific communication block in which the packet is scheduled. 
			Thus, the set of transmitted packets is explicitly given by~$\rho_b = \big\{W^{(d_k),f_{k}}\big\}_{k\in \kappa_{b}}$, $f_{k}\in [F]$, $d_k\in[N]$.
					
			Furthermore, the transmitters must transmit every packet of the file $W^{(d_k)}$ that is not cached by user~$k$ throughout the $|\beta|$ communication blocks.
			The transmission will last until all the required packets are correctly received.

			{The goal of the converse is }to bound the minimum number of communication blocks required to transmit the demanded files $\{ W^{(d_1)},\dotsc, W^{(d_K)} \}$ assuming the worst-case demand vectors.
			To this end, we first consider the optimal delivery time for a given placement.
			Specifically,
				for a given prefetching policy ($\zeta^{\text{Tx}},\zeta^{\text{Rx}}$), the minimum one-shot linear delivery time achievable for the worst-case demand is defined as
					\eqm{\label{eq:dt_prefetching}
						\Tc(\zeta^{\text{Tx}},\zeta^{\text{Rx}}) \triangleq \sup_{\{d_{1}, ..., d_{K}\}} \inf_{\substack{\beta\\ \{\rho_{b}\}_{b\in \beta}}} \frac{1}{F}|\beta|.
					}						
				
			Note that the delivery time is normalized with respect to the {file}-size, such that a single unit of the delivery time corresponds to $F$ communication blocks.
			By the same token, 	we can define the worst-case optimal delivery time as follows.
			\begin{definition}[{Worst-case Delivery Time}]\label{def:delivery_time}
				{In a $K$-user fully-connected wireless network with $K_{T}$ transmitters, with $L_{T}$ antennas per transmitter, with cumulative cache size $t_{T}$ at the transmitters side and $t$ at the receivers side, and upon defining $L=L_Tt_T$, the worst-case optimal delivery time is defined as the minimum achievable one-shot linear delivery time over all caching realizations:}
		\eqm{
			\mathcal{T}^\star_{L}(t) \triangleq \inf_{\zeta^{\text{Tx}},\zeta^{\text{Rx}}} {\Tc(\zeta^{\text{Tx}},\zeta^{\text{Rx}})}.
			}			
			\end{definition}
			Further, we define the optimal (one-shot linear) DoF using the previous definition.
	
			\begin{definition}[{Optimal Degrees-of-Freedom}]\label{def:dof}
                In the cache-aided network of Definition~\ref{def:delivery_time}, the optimal one-shot linear DoF takes the form
					\eqm{
						 {\Dc}^\star_{L}(t) \triangleq \frac{K(1-\gamma)}{\mathcal{T}^\star_{L}(t)}.
					}
			\end{definition}
			
			We recall that we seek to minimize the delivery time (or, equivalently, maximize the DoF performance) under constrained feedback resources, where in each communication block, the transmitters acquire feedback only for a subset of $C$ users in total.
			Let us consider a particular communication block $b$. We denote the set of users for which there exists CSIT at communication block $b$ as $\eta_b$, $\eta_b\subseteq \kappa_b$, $|\eta_b| = C $, and its complementary set as $\eta_b^\setcomp \triangleq \kappa_{b} \backslash\eta_b$.
			Furthermore, we denote the sets of transmitters and receivers who have cached the packet $W^{(d_k),f_{k}}$ intended to receiver~$k$, as $\epsilon_{k}$ and $\delta_{k}$, respectively. 

			For some set $\alpha$, the indicator function is denoted by $\Ind_{\alpha}(k)$, to mean that $\Ind_{\alpha}(k) = 1$ if $k\in\alpha$ and $0$ otherwise.
			Accordingly, we introduce
				\eqm{\label{eq:def_iprime}
						C'_k \triangleq C + \Ind_{\eta_b^\setcomp}(k),\quad\forall k\in\kappa_{b},
				}
			such that $C'_k = C$ if $k\in\eta_b$ and $C'_k = C + 1$ if $k \notin \eta_b$.
			We will also use
				\eqm{\label{eq:def_Lk_min}
					 L_k \triangleq \min(C'_k,\, L_{T}|\epsilon_k|),
				}
			such that $L_k$ represents the minimum between the number of transmit antennas that have cached the packet intended to receiver~$k$ ($W^{(d_k),f_{k}}$) and the number of users for which there is CSIT available excluding user~$k$. In other words, $C'_k$ indicates the number of users for which the transmitters can use the CSIT so as to benefit from spatial multiplexing for packet $W^{(d_k),f_{k}}$. Further, we {introduce} the following definition.
				\begin{definition}[{Packet Order}]\label{def:order_packets}
						A packet is said to be of ``\emph{order }$(u,v)$'' if it is stored in the cache of $u$ different transmitters and $v$ different users.
				\end{definition}

			\subsection{Bounding the number of simultaneous packets}\label{se:converse_1}
			Now,
			we aim to bound the number of users that can be simultaneously served during a given communication block. 
			This bound is presented in the following lemma. 
				\begin{lemma}\label{lem:bound_packets}
					Let us consider a single communication block $b\in\beta$, where each packet of set $\rho_{b}$ is scheduled to be transmitted simultaneously to one of the users of set $\kappa_{b}$, such that $|\rho_{b}|=|\kappa_{b}|=K_{b}$.
					Assume that each transmitter has only access to the CSIT of all users of set $\eta_b\subseteq\kappa_{b}$, $|\eta_b|=C$, and that for every user~$k$, $k\in\kappa_b$, the set of users that have cached the packet intended to user~$k$ is given by~$\delta_k$.  					
					For each intended packet to be successfully decoded at the appropriate receiver, the number of simultaneously transmitted packets must satisfy
						\eqm{\label{eq:bound_packets}
								K_{b} \leq \min_{  k\in \kappa_{b}}\ \left( L_k + |\delta_k|\right).
						}
				\end{lemma}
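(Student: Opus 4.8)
The plan is to work at the level of the one-shot linear received signals within block $b$ and to show that, for every active user $k$, the packet $W^{(d_k),f_{k}}$ intended to $k$ must be nulled at every active user that neither requests it nor has it cached, and that such nulling is subject to two independent budgets: a spatial (antenna) budget governed by $\epsilon_k$ and a feedback budget governed by $\eta_b$. Fixing an arbitrary $k\in\kappa_b$, I would bound the number of users at which this packet must be zero-forced and convert it into the per-$k$ inequality $K_{b}\le L_k+|\delta_k|$; taking the minimum over $k$ then gives \eqref{eq:bound_packets}.

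First I would fix the linear model. Under one-shot linear precoding the block-$b$ transmit vector is $\mathbf{x}=\sum_{m\in\kappa_b}\mathbf{w}_m\tilde{W}^{(d_m)}$, where $\mathbf{w}_m$ carries the packet for user $m$; since only the transmitters in $\epsilon_m$ hold that packet, $\mathbf{w}_m$ lies in a subspace of dimension $L_{T}|\epsilon_m|$. User $j$ observes $y_j=\sum_{m}(\mathbf{h}_j^{\dagger}\mathbf{w}_m)\tilde{W}^{(d_m)}+w_j$ and may subtract exactly those terms whose packet it has cached, i.e. those $m$ with $j\in\delta_m$. Because a single coded stream is carried per user, decodability of $\tilde{W}^{(d_j)}$ from one effective observation forces every residual, non-cancellable interference coefficient to vanish: $\mathbf{h}_j^{\dagger}\mathbf{w}_m=0$ whenever $m\ne j$ and $j\notin\delta_m$. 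I would justify this ``must-null'' step using the genericity of the channel law (continuous, non-degenerate), so that the useful and interfering directions are almost surely independent and no residual interference can be tolerated.

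Next, re-centering these conditions on a \emph{fixed} packet $k$, I would define $\Theta_k\triangleq\{\,j\in\kappa_b:\ j\ne k,\ j\notin\delta_k\,\}$, the set of active users at which packet $k$ must be zero-forced, and bound $|\Theta_k|$ by two separate counting arguments. (i) Spatial budget: since $\mathbf{w}_k$ lives in an $L_{T}|\epsilon_k|$-dimensional subspace and must be orthogonal to the $|\Theta_k|$ generic channels $\{\mathbf{h}_j\}_{j\in\Theta_k}$ while keeping $\mathbf{h}_k^{\dagger}\mathbf{w}_k\ne 0$, genericity forces $|\Theta_k|\le L_{T}|\epsilon_k|-1$. (ii) Feedback budget: enforcing $\mathbf{h}_j^{\dagger}\mathbf{w}_k=0$ requires $\mathbf{w}_k$ to depend on $\mathbf{h}_j$, hence $j\in\eta_b$; for any $j\notin\eta_b$ the vector $\mathbf{h}_j$ is independent of $\mathbf{w}_k$ and continuously distributed, so $\mathbf{h}_j^{\dagger}\mathbf{w}_k=0$ has probability zero. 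Thus $\Theta_k\subseteq\eta_b$, and since $k\notin\Theta_k$ we obtain $|\Theta_k|\le|\eta_b\setminus\{k\}|=C'_k-1$, using $C'_k=C+\Ind_{\eta_b^{\setcomp}}(k)$ from \eqref{eq:def_iprime}. Combining the two, $|\Theta_k|\le\min(L_{T}|\epsilon_k|,\,C'_k)-1=L_k-1$ by \eqref{eq:def_Lk_min}.

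Finally I would close the count. Relative to packet $k$, the active set partitions as $\kappa_b=\{k\}\,\cup\,(\delta_k\cap\kappa_b)\,\cup\,\Theta_k$ disjointly, using $k\notin\delta_k$ because the transmitters only send packets uncached by their recipient, so $K_{b}=1+|\delta_k|+|\Theta_k|\le 1+|\delta_k|+(L_k-1)=L_k+|\delta_k|$, with $|\delta_k|$ read as the number of active cachers. As this holds for every $k\in\kappa_b$, taking the minimum yields $K_{b}\le\min_{k\in\kappa_b}(L_k+|\delta_k|)$. The step I expect to be the main obstacle is the rigorous treatment of the feedback budget in (ii): one must argue by a probability-one/genericity statement, rather than constructively, that a packet simply cannot be nulled at a user outside $\eta_b$, and then fuse this with the antenna-dimension constraint into the single quantity $L_k$ without double-counting user $k$ — the bookkeeping that produces exactly $C'_k-1$ (rather than a uniform $C$ or $C-1$) is precisely where the limited-feedback refinement of the converse enters.
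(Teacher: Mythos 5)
Your proposal is correct and follows essentially the same route as the paper: the set you call $\Theta_k$ is exactly the paper's $\delta_k^{\setcomp}$, your spatial budget is the paper's rank/solvability condition on the rotated linear system $\bA_{\delta_k^\setcomp}\hat{\vv}_k=\bb_{\delta_k^\setcomp}$ (giving $m_k\le L_T|\epsilon_k|-1$), and your feedback budget reproduces the paper's argument that $\delta_k^{\setcomp}\subseteq\eta_b$ with the same $C'_k-1$ bookkeeping distinguishing $k\in\eta_b$ from $k\notin\eta_b$. The only difference is presentational — you argue by genericity of the channels where the paper writes out the virtual-interference-channel linear system explicitly — so there is nothing substantive to add.
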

				\begin{proof}
					The proof is relegated to Appendix~\ref{se:proof_lemma_simultaneous}.
				\end{proof}	
				\begin{corollary}\label{cor:order_packets}
					Consider some communication block~$b$.
					In order {for a packet of order $(u,v)$} to be decoded successfully, it can be transmitted simultaneously with \emph{at most} $\min(C, L_T u) + v - 1$ other packets of the same order.
				\end{corollary}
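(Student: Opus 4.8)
The plan is to derive Corollary~\ref{cor:order_packets} directly from Lemma~\ref{lem:bound_packets} by specializing its statement to the case where all simultaneously-transmitted packets share the same order $(u,v)$. First I would observe that, by Definition~\ref{def:order_packets}, a packet of order $(u,v)$ is cached at exactly $u$ transmitters and $v$ users. For the tagged packet intended to some user~$k$, this means $|\epsilon_k| = u$ and $|\delta_k| = v$, since $\delta_k$ is precisely the set of users who have cached that packet. Substituting $|\epsilon_k| = u$ into the definition~\eqref{eq:def_Lk_min} gives $L_k = \min(C'_k,\, L_T u)$, and plugging $|\delta_k| = v$ into the bound~\eqref{eq:bound_packets} of Lemma~\ref{lem:bound_packets} yields that the total number of simultaneously transmitted packets $K_b$ satisfies $K_b \le L_k + v = \min(C'_k, L_T u) + v$.

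The next step is to convert the bound on the \emph{total} number $K_b$ of simultaneous packets into a bound on the number of \emph{other} packets that can accompany the tagged one; this is simply $K_b - 1 \le \min(C'_k, L_T u) + v - 1$. The remaining task is to replace $C'_k$ by $C$ to match the claimed expression $\min(C, L_T u) + v - 1$. Here I would invoke the worst-case nature of the bound: the corollary must hold for \emph{every} user $k\in\kappa_b$ whose packet is being decoded, and Lemma~\ref{lem:bound_packets} already takes the minimum over $k\in\kappa_b$. By definition~\eqref{eq:def_iprime}, $C'_k = C$ whenever $k\in\eta_b$ (i.e.\ CSIT for user $k$ is available) and $C'_k = C+1$ otherwise. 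Since $|\eta_b| = C \ge 1$, there exists at least one user in $\eta_b$, and taking the minimum over $k$ in~\eqref{eq:bound_packets} therefore selects a constraint no weaker than the one obtained with $C'_k = C$; equivalently, the binding constraint uses the smaller value $\min(C, L_T u)$ rather than $\min(C+1, L_T u)$.

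I expect the main obstacle to be justifying cleanly why $C$ (and not $C+1$) is the correct coefficient in the final expression. The subtlety is that $C'_k$ was deliberately inflated by one for the non-feedback users in~\eqref{eq:def_iprime} to account for the fact that such a user need not ``spend'' a spatial dimension on separating its own signal; the corollary, however, is stated in terms of the raw feedback budget $C$. The resolution is that the tightest constraint in the minimization over $k\in\kappa_b$ is achieved at a user with $C'_k = C$ (one that does provide feedback, which exists as $|\eta_b|=C\ge1$), so the minimum over $k$ in Lemma~\ref{lem:bound_packets} is upper bounded by $\min(C, L_T u) + v$, giving exactly the claimed $\min(C, L_T u) + v - 1$ other packets. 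I would make sure to state explicitly that the corollary is a worst-case (upper-bound) statement over the decoding constraints of all participating users, so that evaluating the minimum at the most restrictive user is precisely what the proof requires.
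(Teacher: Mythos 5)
Your proposal is correct and follows essentially the same route as the paper: substitute $|\epsilon_k|=u$ and $|\delta_k|=v$ into Lemma~\ref{lem:bound_packets}, then observe that the minimum over $k\in\kappa_b$ of $\min(C'_k,L_Tu)+v$ equals $\min(C,L_Tu)+v$ because at least one served user lies in $\eta_b$ (so $C'_k=C$ for that user). Your added discussion of why $C$ rather than $C+1$ is the binding coefficient merely makes explicit a step the paper leaves implicit in the equality $\min_{k\in\kappa_b}\min(C'_k,L_Tu)+v=\min(C,L_Tu)+v$.
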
			
				\begin{proof}
					The proof follows directly after substituting $|\delta_{k}|$ for $v$ and $|\epsilon_{k}|$ for $u$ in~\eqref{eq:bound_packets} of Lemma~\ref{lem:bound_packets} for every $k\in \kappa_{b}$.
					Therefore, we obtain that $K_{b} \leq \min_{k\in \kappa_{b}}\min(C'_k,\,L_{T}u) + v = \min(C,\,L_T u) + v$.
				\end{proof}	

			Next, we present the definition of \emph{feasible set of packets}, which is based on Lemma~\ref{lem:bound_packets}.
				\begin{definition}[{Feasible Sets}]\label{def:feasible_packets}
						Let a communication block $b$ be characterized by the set $\rho_{b}$ of packets to be transmitted, by the set $\kappa_{b}$ of users for whom the packets are intended, and by the set $\eta_{b}$ of users for whom there is CSIT.
						A set of packets $\rho_{b}$ selected to be transmitted at communication block $b$ is said to be \emph{feasible} if it satisfies~\eqref{eq:bound_packets} in Lemma~\ref{lem:bound_packets}, i.e., if for {every} $k\in\kappa_{b}$ it holds that
							\eqm{\label{eq:bound_packets_def}
								K_{b}\leq  L_k + |\delta_k|.
							}
				\end{definition}
		
			Consider a subset of users $\delta\subseteq[K]$ and a subset of transmitters $\epsilon\subseteq[K_T]$.
			We define
\eqm{
						\omega^{(n)}_{\epsilon,\delta} \triangleq \bigcap_{\substack{f\in[F]\\ e\in\epsilon, c\in\delta}}  \{W^{(n),f} \cap \mathcal{Z}^{\text{Tx}}_e \cap \mathcal{Z}_c\}
				}
to be the set of packets of file $W\expn$, $n\in[N]$, that are exclusively stored in the caches of the transmitters in $\epsilon$ and the users in $\delta$.
			Further, the number of packets in the set~$\omega^{(n)}_{\epsilon,\delta}$ is denoted by $a^{(n)}_{\epsilon,\delta}$.

			\subsection{Lower-bound on the number of communication blocks}\label{se:converse_2_a}
			In this section, we lower-bound the number of communication blocks that are required for a successful transmission.
			This lower bound is based on a linear program that was first stated in\cite{7857805}. The formulation of the linear program matches that of\cite{7857805}, and it is presented in Appendix~\ref{app:liner_program} for completeness.
			
			Let us consider first  a given demand vector $\bfd$	and cache-placement strategies $\zeta^{\text{Tx}}$, $\zeta^{\text{Rx}}$.
			The minimum number of communication blocks $|\beta|$ required to successfully transmit all the requested files in~$\bfd$ for the specific strategies $\zeta^{\text{Tx}}$, $\zeta^{\text{Rx}}$, is denoted as $T_\beta^\star(\zeta^{\text{Tx}},\zeta^{\text{Rx}},\bfd)$  and is rigorously defined in Appendix~\ref{app:liner_program}.

			We are interested in lower-bounding the value of $T_\beta^\star(\zeta^{\text{Tx}},\zeta^{\text{Rx}},\bfd)$ for any worst-case demand~$\bfd$.
			As shown in \cite{7857805} (see also \cite{YuMA18, piovanoGDoFMISO2019TransIT}), the solution to the optimization problem can be lower-bounded by averaging over all the possible permutations of the demand vector $\bfd$.
			Hence, for a given cache-placement strategy $\zeta^{\text{Tx}}$, $\zeta^{\text{Rx}}$
			let us define				\eqm{\label{eq:P3_init}
						\bar{T}_\beta^\star(\zeta^{\text{Tx}},\zeta^{\text{Rx}}) \triangleq \frac{1}{|\psi(N,K)|} \sum_{\bfd\in\psi({N,K})} T_\beta^\star(\zeta^{\text{Tx}},\zeta^{\text{Rx}},\bfd)
				}
			to be the average number of required communication blocks over the set of all possible worst-case demand-vectors $\bfd$. In the above, $\psi({N,K})$ denotes the set of all $K$-permutations of the library files ($N$ indices), and recall that $|\psi(N,K)| = \frac{N!}{(N-K)!}$.
								
			We focus now on lower-bounding~$\bar{T}_\beta^\star(\zeta^{\text{Tx}},\zeta^{\text{Rx}})$.
			Recalling Corollary~\ref{cor:order_packets}, a packet of order $(u,v)$ can be scheduled with at most $\min(L_T u, C) +v-1$ packets of the same order.
			Consequently, for any $\epsilon\subseteq[K_T]$,  $\delta\subseteq[K]$,  such that $|\epsilon| = u$ and $|\delta| = v$, the maximum possible multiplexing gain for any packet in set $\omega\expn_{\epsilon,\delta}$, $n\in[N]$, is $\min\big(\min(L_T u, C) +v, K\big)$.

			From this bound over the maximum multiplexing gain, we can bound the minimum number of communication blocks needed for a specific demand and cache-placement strategy. 
			Specifically, let us first note that, in order to transmit all the packets in a set $\omega^{(d_j)}_{\epsilon,\delta}$ satisfying that  $|\epsilon| = u$ and $|\delta| = v$,  we need at least ${a^{(d_j)}_{\epsilon,\delta}}/{\min(\min(C,L_T u) + v,K)}$ communication blocks. 
			Upon defining
			\eqm{\label{eq:def_guv}
				g_{u,v} \triangleq \min(C,L_T u) + v
			}
			for sake of compactness, we obtain the lower bound
				\eqm{\label{eq:lower_bound_1}
					T_\beta^\star(\zeta^{\text{Tx}},\zeta^{\text{Rx}},\bfd) \geq \sum_{v=0}^{K}\sum_{u=1}^{K_T}\sum_{j=1}^{K}\sum_{\substack{\epsilon\subseteq [K_T] \\ |\epsilon| = u}}\sum_{\substack{\delta\subseteq [K] \\ |\delta| = v \\ \delta\not\owns j}} \frac{a^{(d_j)}_{\epsilon,\delta}}{g_{u,v}}.
				}
			Incorporating~\eqref{eq:lower_bound_1} in~\eqref{eq:P3_init} yields
				\eqm{%
					 & \bar{T}_\beta^\star(\zeta^{\text{Tx}},\zeta^{\text{Rx}})  \geq  \sum_{\bfd\in\psi({N,K})} \frac{1}{|\psi(N,K)|} \nonumber \\
					 & \hspace{16ex} \times\bigg(\sum_{v=0}^{K}\sum_{u=1}^{K_T}\sum_{j=1}^{K}\sum_{\substack{\epsilon\subseteq [K_T] \\ |\epsilon| = u}}\sum_{\substack{\delta\subseteq [K] \\ |\delta| = v \\ \delta \not\owns j}} \frac{a^{(d_j)}_{\epsilon,\delta}}{g_{u,v}}\bigg) \\
						&\hspace{7ex} \geq  \sum_{v=0}^{K}\sum_{u=1}^{K_T}\sum_{j=1}^{K}\sum_{\substack{\epsilon\subseteq [K_T] \\ |\epsilon| = u}}\sum_{\substack{\delta\subseteq [K] \\ |\delta| = v \\ \delta \not\owns j}}\frac{1}{N}\sum_{n=1}^{N}\frac{a\expn_{\epsilon,\delta}}{g_{u,v}}\label{eq:proof_aver_N}\\
						&\hspace{7ex}  =  \frac{1}{N}\sum_{v=0}^{K}\sum_{u=1}^{K_T}\;\frac{1}{g_{u,v}}\sum_{j=1}^{K}\sum_{\substack{\epsilon\subseteq [K_T] \\ |\epsilon| = u}}\sum_{\substack{\delta\subseteq [K] \\ |\delta| = v \\ \delta \not\owns j}}\sum_{n=1}^{N}a\expn_{\epsilon,\delta}, \label{eq:lower_bound_5}
				}	
			where~\eqref{eq:proof_aver_N} follows since, over the set of demand-vector permutations $\psi(N,K)$, every file $W^{(n)}$ is requested by every user $j$ the same number of times.
			The last equality is obtained from a simple re-ordering of terms.
			\subsection{Tightening the lower-bound}\label{se:converse_5}
			The lower-bound in~\eqref{eq:lower_bound_5} is obtained by combining the approach in\cite{7857805} with the novel outcome of Lemma~\ref{lem:bound_packets} that accounts for the limited feedback constraint.
			Henceforth, we deviate from the approach in~\cite{7857805} so as to tighten the lower-bound.
			Let us consider the total number $a_{\epsilon,\delta}$ of packets stored exclusively at the transmitters in $\epsilon\subseteq[K_{T}]$ and the receivers in set $\delta\subseteq[K]$. This number satisfies
				\eqm{
					a_{\epsilon,\delta}\triangleq \sum_{n=1}^{N}a\expn_{\epsilon,\delta}.
				}
			Similarly, let $b_{\epsilon,v}$ denote the size of the set of packets stored exclusively by all transmitters in $\epsilon$ and at a total of $v$ receivers.
			Then,
				\eqm{\label{eq:def_b_epsilon_v}
						b_{\epsilon,v}\triangleq \sum_{\substack{\delta\subseteq [K] \\ |\delta| = v}}a_{\epsilon,\delta}.
				}		
			For a given set of transmitters $\epsilon$ and a given user-set size {$|\delta|=v$}, it follows that
				\eqm{\label{eq:equivalent_bs2}%
						\sum_{j=1}^{K}\sum_{\substack{\delta\subseteq [K] \\ |\delta| = v \\ \delta \not\owns j}}a_{\epsilon,\delta} = (K-v)b_{\epsilon,v}.
				}	
			In order to prove~\eqref{eq:equivalent_bs2}, let us consider a specific subset $\delta'\subseteq[K]$, $|\delta'| = v$.
			The number of packets cached at the transmitters of set $\epsilon$ and the users of set $\delta'$ is given by $a_{\epsilon,\delta'}$.
			For a given $j\in [K]$, the term $a_{\epsilon,\delta'}$ is included in the summation $\sum_{\delta\subseteq [K],\, |\delta| = v,\,\delta \not\owns j}a_{\epsilon,\delta}$ if and only if \emph{$j\notin\delta'$}.
			Since~\eqref{eq:equivalent_bs2} sums over all $j\in[K]$ and $|\delta'|=v$, the term $a_{\epsilon,\delta'}$ appears $K-v$ times in~\eqref{eq:equivalent_bs2}, one for each $j$ satisfying that $j\notin\delta'$.
			From the fact that this holds for any $\delta'\subseteq[K]$ with $|\delta'| = v$, and from the definition of $b_{\epsilon,v}$ in~\eqref{eq:def_b_epsilon_v}, we obtain~\eqref{eq:equivalent_bs2}.
			Applying~\eqref{eq:equivalent_bs2} into~\eqref{eq:lower_bound_5} yields
				\eqm{%
					 \bar{T}_\beta^\star(\zeta^{\text{Tx}},\zeta^{\text{Rx}})
							& \geq   \frac{1}{N}\sum_{v=0}^{K}\sum_{u=1}^{K_T}\frac{K-v}{g_{u,v}}\sum_{\substack{\epsilon\subseteq [K_T] \\ |\epsilon| = u}}b_{\epsilon,v}  \\
							& =   \frac{1}{N}\sum_{v=0}^{K}\sum_{u=1}^{K_T}\frac{K-v}{\min(C,L_{T}u) +v}b_{u,v}, \label{eq:lowerbound_t_yu}
				}				
			where in~\eqref{eq:lowerbound_t_yu} we have applied~\eqref{eq:def_guv} and defined $b_{u,v}$ to be the number of packets cached at $u$ transmitters and $v$ receivers. It is direct that
				\eqm{
						b_{u,v} \triangleq\sum_{\substack{\epsilon\subseteq [K_T] \\ |\epsilon| = u}}b_{\epsilon,v}.
				}
			Note that $\bar{T}_\beta^\star(\zeta^{\text{Tx}},\zeta^{\text{Rx}})$ represents the necessary number of communication blocks to complete the transmission.
			From the definition of delivery time in~\eqref{eq:dt_prefetching}, it follows that
				\eqm{\label{eq:change_delivery_block}
					\Tc(\zeta^{\text{Tx}},\zeta^{\text{Rx}}) =	\frac{1}{F}\bar{T}_\beta^\star(\zeta^{\text{Tx}},\zeta^{\text{Rx}}),
				}
			where~\eqref{eq:change_delivery_block} simply translates the unit of measure to consider normalization by the file size instead of the packet size.
			From~\eqref{eq:lowerbound_t_yu} and~\eqref{eq:change_delivery_block} we have that
				\eqm{%
					 \Tc(\zeta^{\text{Tx}},\zeta^{\text{Rx}})
							& \geq  \frac{1}{FN}\sum_{v=0}^{K}\sum_{u=1}^{K_T}\frac{K-v}{\min(C,L_{T}u) +v}b_{u,v}. \label{eq:lower_bound_6}							
				}		
Consequently, we have obtained a lower bound that depends only on the portion of the library that is cached at a specific number of transmitters and of receivers, irrespectively of who has stored which packet.

			For some function $c(\cdot, \cdot)$, we denote the lower convex envelope of the points \[\{\big(t_1,t_2,c(t_1,t_2)\big) | t_1,t_2 \in \{0,\,1,\,\dots,\,K\}\},\] by $\conv(c(t_1,t_2))$.
			Let us introduce the notation $c({u,v})\triangleq \frac{K-v}{\min(C,L_{T}u)+v}$.
			Since $c({u,v})$ is a decreasing sequence in $v$ and non-increasing in $u$, $\conv(c({u,v}))$ is a non-increasing and convex function~\cite{YuMA18}. 						
			Furthermore, we define the number of packets cached at $u$ transmitters (resp. $v$ receivers) as $b_{u}^{t}$ (resp. $b_{v}^{r}$), i.e.,
				\eqm{
						b_{u}^{t} 	 & \triangleq \sum_{v=0}^{K} b_{u,v}, \\
						b_{v}^{r} & \triangleq \sum_{u=1}^{K_T} b_{u,v}.
				}
			Therefore, the cache-size constraints of the considered setting can be written as
				\eqm{
						\sum_{u=1}^{K_T}\sum_{v=0}^{K} b_{u,v}
								= \sum_{u=1}^{K_T} b_{u}^{t}
								= \sum_{v=0}^{K} b_{v}^{r}
							& = NF, \label{eq:bound_max_sizeN}\\
						\sum_{v=0}^{K} v b_{v}^{r} &\leq FK \gamma N, \label{eq:bound_max_sizeMr}\\
						 \sum_{u=1}^{K_T} u b_{u}^t &\leq FK_T \gamma_T N. \label{eq:bound_max_sizeMt}
				}
  		The constraint in~\eqref{eq:bound_max_sizeN} ensures that every packet of the library is cached at some node (transmitter or receiver) in the network, while~\eqref{eq:bound_max_sizeMr} corresponds to the cache size constraint at the users, and~\eqref{eq:bound_max_sizeMt} corresponds to the cache size constraint at the transmitters.
			From the above,~\eqref{eq:lower_bound_6} can be lower-bounded as
				\eqm{
					\Tc(\zeta^{\text{Tx}},\zeta^{\text{Rx}})
							& \geq  \frac{1}{FN}\sum_{v=0}^{K}\sum_{u=1}^{K_T}\frac{K-v}{\min(C,L_{T}u) +v}b_{u,v}\label{eq:proof_end_jump} \\
							& \geq  \conv\big(c({K_T\gamma_T,\; t})\big) \label{eq:lower_bound_7}	\\
							& =  \conv\LB\frac{K(1-\gamma)}{t + \min(C, K_T\gamma_T L_{T})}\RB, 			
				}
			where~\eqref{eq:lower_bound_7} comes from exploiting the convexity of the problem and from applying Jensen's Inequality.
			The detailed proof of how to reach~\eqref{eq:lower_bound_7} from~\eqref{eq:proof_end_jump} is relegated to Appendix~\ref{se:converse_a}. 	
			Since $\Tc^\star_{L}(t,C) \triangleq \inf_{\zeta^{\text{Tx}},\zeta^{\text{Rx}}} {\Tc(\zeta^{\text{Tx}},\zeta^{\text{Rx}})}$, the converse proof of Theorem~\ref{theoMultiTXresult} is concluded. \qed

		\section{Conclusion}\label{secConclusion}							
We have characterized the optimal one-shot linear DoF of the multi-antenna cache-aided broadcast channel and its multi-transmitter equivalent, under limited feedback resources {and uncoded placement,} and we have provided a novel multi-antenna coded caching algorithm which we proved to be optimal. Our converse applies to a variety of other works, allowing the identification of their exact DoF performance.

Our results showed that achieving the maximum DoF performance may only require feedback from a limited number of users equal to the number of antennas. This further allows non-scaling feedback costs with respect to the number of users, as compared to previously known methods.

\subsection*{Various benefits of reducing feedback}
This feedback reduction has multiple beneficial effects. Firstly, reducing the feedback requirements will allow for an increase of the effective DoF, simply because a bigger fraction of the coherence period is dedicated to communicating data rather than to feedback training.
Secondly, the proposed algorithm allows for the increase of the overall number of users without a subsequent increase in feedback costs.

At the end of the day, our result makes a strong argument that caching can substantially ameliorate the well known feedback bottleneck of multi-antenna high-rate environments.

\appendices

		\section{Proof of Lemma~\ref{lem:bound_packets} }\label{se:proof_lemma_simultaneous}
		We split the proof in two disjoint cases.
		We begin with the assumption that each transmitter has access to CSIT from every user scheduled to be served in the considered communication block\footnote{While this setting is already studied in~\cite{7857805}, we will recall it here since it is a preliminary step towards the general feedback-constrained bound that we present.}.
		We will then, immediately after, introduce the CSIT constraint that says that the transmitter can only receive feedback from some $C$ users.
		
		Let us consider a single communication block.
		Without loss of generality, we assume that the $K_{b}$ served users are the first $K_{b}$ users, from $1$ to $K_{b}$, 
		and that the packets to be transmitted are $\{W^{(n),1}\}_{n=1}^{K_{b}}$.
		Under the one-shot and linear precoding assumptions, it follows that each transmitter sends a linear combination of the scheduled packets.
		In particular, the transmitted signal from a given transmitter~$j$ only carries information of the packets that it has cached, i.e., it only includes the users~$i\in[K_b]$ for which $j\in\epsilon_i$.
		We define the global beam-forming vector applied to the packet intended for user~$i$ as $\mathbf{p}_i \in \Cb^{L_{T}|\epsilon_{i}| \times 1}$, since only $|\epsilon_{i}|$ transmitters have cached ${W}^{(i),1}$.
		
			\subsection{Transmission of packets with CSIT from all scheduled users}\label{se:proof_lemma_simultaneous_case1}
			We proceed in a similar way to~\cite[Lemma 3]{7857805} {by converting} the MISO BC setting into a new MISO interference channel with $K_b$ virtual transmitters, $\{\widehat{\TX}_i\}_{i=1}^{K_b}$.
			$\widehat{\TX}_i$ has $L_{T}|\epsilon_i|$ antennas and aims to transmit  ${W}^{(i),1}$ to user~$i\in [K_b]$.
			Note that the channel of different virtual transmitters is correlated because in the real physical channel the same antenna of a certain transmitter belongs to several virtual {transmitters} \cite{7857805}.
			Let us denote the channel coefficients from virtual transmitter $\widehat{\TX}_i$ to user~$k$ by $\gv_{k,i}\in\Cb^{L_T|\epsilon_i|\times 1}$.
			Then, in an analogous way to the approach in\cite{Razaviyayn2012,7857805}, it follows that the decodability conditions that must be satisfied\footnote{Since we are restricted to linear transmission schemes, the transmission block is not successful if these conditions are not satisfied, simply because the signal-to-interference ratio would not be enough to decode the intended message (cf.~\cite{7857805,piovanoCCparallelChannelsTComm2020}).} are
				\eqm{
					\gv_{k,i}^\dagger\mathbf{p}_i &= 0			\quad \forall i,k\in[K_b] : k\notin\delta_i \label{eq:cond_zf_a}\\
					\gv_{k,k}^\dagger\mathbf{p}_k &\neq 0		\quad \forall k\in [K_b],\label{eq:cond_zf_b}
				}			
			where we recall that $\delta_i$ is the subset of users that have cached the packet intended to user~$i$.
			
			Under the assumption that the transmitters have access to the CSI of all the $K_b$ served users, we can rewrite the conditions in~\eqref{eq:cond_zf_a} as follows: 
			First, we know from~\eqref{eq:cond_zf_b} that vectors $\mathbf{p}_i$
			must have at least a non-zero coefficient, denoted by $q_i$.
			We can rotate the vector such that $\mathbf{p}_i = q_i \bP_i\begin{bsmallmatrix}1 \\ \hat{\vv}_i\end{bsmallmatrix}$,
			where $\bP$ is a permutation matrix and $\hat{\vv}_i$ has size $(L|\epsilon_i|-1 \times  1)$.
			Upon defining similarly
			$\hat{\gv}_{k,i} \triangleq \bP^{-1}_i{\gv}_{k,i} = \begin{bsmallmatrix}\hat{g}_{k,i}^{(1)} \\ \hat{\gv}_{k,i}^{(2:)}\end{bsmallmatrix}$, it follows that 
				\eqm{
					\gv_{k,i}^\dagger\mathbf{p}_i
						& = (\bP_i\hat{\gv}_{k,i})^\dagger q_i\bP_i\begin{bmatrix}1 \\[-1ex]
						\hat{\vv}_i\end{bmatrix}\\
						& = q_i\LB\hat{g}_{k,i}^{(1)\dagger} +  \hat{\gv}_{k,i}^{(2:)\dagger}\hat{\vv}_i\RB
						= 0, \label{eq:cond_zf_rotated}
				}
			where the last equality follows from~\eqref{eq:cond_zf_a}.
			
			Consider now the set $\delta_i^\setcomp\triangleq [K_b]\setminus \big( \delta_i\cup \{i\} \big)$ of served users that neither cache nor desire ${W}^{(i),1}$.
			Let $m_i\triangleq |\delta_i^\setcomp|$ and let $\delta_i^\setcomp(n)$ denote the $n$-th user of $\delta_i^\setcomp$.
			Since~\eqref{eq:cond_zf_rotated} has to hold for any $k\in\delta_i^\setcomp$, 
			we obtain the following linear system:
				\eqm{\label{eq:system_eqs_1}
					\underbrace{%
						\begin{bmatrix}
							\hat{\gv}_{\delta_i^\setcomp(1),i}^{(2:)\dagger} \\
							\hat{\gv}_{\delta_i^\setcomp(2),i}^{(2:)\dagger} \\
							\vdots \\
							\hat{\gv}_{\delta_i^\setcomp(m_i),i}^{(2:)\dagger}
						\end{bmatrix}%
					}_{%
						\bA_{\delta_i^\setcomp}
					}
					\hat{\mathbf{v}}_i =
					\underbrace{%
						\begin{bmatrix}
							\hat{g}_{\delta_i^\setcomp(1),i}^{(1)\dagger} \\
							\hat{g}_{\delta_i^\setcomp(2),i}^{(1)\dagger} \\
							\vdots \\
							\hat{g}_{\delta_i^\setcomp(m_i),i}^{(1)\dagger}
						\end{bmatrix}	
					}_{%
						\bb_{\delta_i^\setcomp}
					}								
				}
			where $\bA_{\delta_i^\setcomp}\in\Cb^{m_i\times (L_{T}|\epsilon_i |-1)}$ and $\bb_{\delta_i^\setcomp}\in\Cb^{m_i \times 1}$.
			Because~\eqref{eq:cond_zf_a} needs to be satisfied,
			{the linear system in \eqref{eq:system_eqs_1} must be solvable}
			 almost surely in order to {guarantee the successful reception} of all the messages.
			{Since} $\rank(\bA_{\delta_i^\setcomp}) = \min(m_i,\,L_{T}|\epsilon_i|-1)$, {it follows} that $m_i\leq L_{T}|\epsilon_i|-1$, {implying} that
				\eqm{\label{eq:ineq_num_users_perf}
						K_{b}- |\delta_i| - 1 \leq L_{T}|\epsilon_i |-1 \  \Rightarrow \  K_{b} \leq L_{T}|\epsilon_i| + |\delta_i|.
				}
			{Let us consider now an arbitrary communication block~$b$ during which a set of users $\kappa_b$, $|\kappa_b|=K_b$, is served.} Given that~\eqref{eq:ineq_num_users_perf} must hold for any $i\in\kappa_{b}$, we obtain that
				\eqm{
						K_{b} \leq \min_{i\in\kappa_{b} } L_{T}|\epsilon_i | + |\delta_i|,
				}
			from which we obtain Lemma~\ref{lem:bound_packets} for the case without feedback constraints.
			
			\subsection{Transmission of packets with CSIT from only~$C$ users}\label{se:proof_lemma_simultaneous_case2}
					
			Let us now consider the case in which the transmitters have CSIT only from a subset $\eta$ of $|\eta|=C$ users, and recall that  $\eta^\setcomp = \kappa_{b}\backslash\eta$.
			
			Since the transmitters do not know the channel towards the users belonging to the set $\eta^\setcomp$, the condition in~\eqref{eq:cond_zf_a} can not be satisfied with a high probability.
			In consequence, the transmitters can only use the CSIT from the users belonging to $\eta$, such that the solvable linear system becomes
				\eqm{\label{eq:system_eqs_2}
					\bA_{\delta_i^\setcomp\cap\eta}
					\hat{\vv}_i =
					\bb_{\delta_i^\setcomp\cap\eta},
				}		
			where $\bA_{\delta_i^\setcomp\cap\eta}$ and $\bb_{\delta_i^\setcomp\cap\eta}$ are defined just like $\bA_{\delta_i^\setcomp}$ and $\bb_{\delta_i^\setcomp}$ in~\eqref{eq:system_eqs_1} except that now we only consider the users in~$\{\delta_i^\setcomp\cap\eta\}$ rather than in $\delta_i^\setcomp$.
			Note that the set $\delta_i^\setcomp\cap\eta$ is comprised of the users that have not cached the message for user~$i$ \emph{and} for whom the transmitter has acquired CSIT.
			
			We focus now on the {required conditions that allow the successful reception of packets by} each user in $\kappa_{b}$.
			From~\eqref{eq:system_eqs_2}, it follows that the set $\delta_i^\setcomp$ must be a subset of the users for which there is CSIT available ($\delta_i^\setcomp\subseteq\eta$) for any user $i\in\kappa_{b}$.
			This is due to the fact that, for any user~$i$ such that $i\notin\eta$, the lack of CSIT implies the impossibility of satisfying~\eqref{eq:cond_zf_a} and thus the impossibility of correctly decoding at user~$i$\cite{piovanoCCparallelChannelsTComm2020}.
			Following the same reasoning as in~\eqref{eq:ineq_num_users_perf}, having $\delta_i^\setcomp\subseteq\eta$ implies that $m_i \leq C$.
			However, note that it holds that $i\notin \delta_i^\setcomp$ for any user~$i$. 
			Thus, $|\delta_i^\setcomp\cap\eta|\leq C-1$ \,$\forall i \in\eta$.
			Let $m^{\eta}_i\triangleq |\delta_i^\setcomp\cap\eta|$ be the size of the intersection between the set of users not caching the packet intended for user~$i$ and the set of users for whom there is CSIT available.
			We then have
				\eqm{\label{eq:ineq_num_users_notperf2}
					&&m^{\eta}_i &\leq C - 1
						& \Rightarrow \quad	K_{b}  &\leq C + |\delta_i|
						&& \text{ if $i \in\eta$,}
						\\
					&&m^{\eta}_i &\leq C
						& \Rightarrow \quad	K_{b}  & \leq C + |\delta_i|+ 1
						&& \text{ if $i \notin\eta$.}
				}	
			We recall the notation introduced in~\eqref{eq:def_iprime}, where for any $i\in\kappa_{b}$ we define $C'_i \triangleq C + \Ind_{\eta^\setcomp}(i)$. 
			Furthermore, the bound in~\eqref{eq:ineq_num_users_perf} also holds, as it suffices to consider a genie that provides the CSIT of every user to the transmitters.
				Since~\eqref{eq:ineq_num_users_perf} must hold for any $i\in\kappa_{b}$, we obtain that
				\eqm{
						K_{b} \leq \min_{i\in\kappa_{b}} \Big(\min(C'_i,\, L_{T}|\epsilon_i|) + |\delta_i|\Big),
				}
			which concludes the proof of Lemma~\ref{lem:bound_packets}. \qed

		\section{Additional Proofs and Material}	
		\subsection{Integer Program Formulation}\label{app:liner_program}	
													
			Let us consider a given demand vector $\bfd$
			and cache-placement strategies $\zeta^{\text{Tx}}$, $\zeta^{\text{Rx}}$ at the transmitters and the receivers, respectively.
			Equipped with the definition of a feasible set of packets (cf. Definition~\ref{def:feasible_packets}), we write an integer program that seeks to minimize the number of required communication blocks for a specific $\zeta^{\text{Tx}}$, $\zeta^{\text{Rx}}$, $\bfd$, as follows.
				\eqmv{P1}{%
					\min_{\substack{\rho_b\\ b\in\beta}}\quad & |\beta| \tag{P1-a}\\
								& \text{s.t. }	\bigcup_{b\in\beta} \rho_b = \bigcup_{k\in[K]} \LB W^{(d_{k})}\backslash \mathcal{Z}_{k} \RB \tag{P1-b} \label{eq:p1a}\\
								& \qquad	 \rho_{b} \text{ is feasible }	\forall b\in \beta \label{eq:p1c}\tag{P1-c} ,
				}
			where~\eqref{eq:p1a} imposes the necessary condition that all the demanded packets that are not in the cache of the intended user must be transmitted.
			The solution to~\eqref{eq:P1} is denoted by~$T_\beta^\star(\zeta^{\text{Tx}},\zeta^{\text{Rx}},\bfd)$.

		\subsection{{Transition from~\eqref{eq:proof_end_jump} to \eqref{eq:lower_bound_7} }}\label{se:converse_a}	
		We {have }that
			\eqm{
				&\frac{1}{FN}\sum_{v=0}^{K}\sum_{u=1}^{K_T}b_{u,v} \conv\LB\frac{K-v}{\min(C,L_{T}u) +v}\RB\nonumber\\
						&\qquad\qquad =	\frac{1}{NF}\sum_{v=0}^{K}\sum_{u=1}^{K_T} b_{u,v}  \conv\LB c({u,v})\RB \\
						&\qquad\qquad \myoverset{(a)}{=} \frac{\sum_{v=0}^{K}\sum_{u=1}^{K_T} b_{u,v} \conv\LB c({u,v})\RB}{\sum_{v=0}^{K}\sum_{u=1}^{K_T} b_{u,v}} \\
						&\qquad\qquad  \myoverset{(b)}{\geq} \conv \LB c{\bigg(
									\frac{\sum_{u=1}^{K_T} u b^t_{u}}{\sum_{u=1}^{K_T} b^t_u},\
									\frac{\sum_{v=0}^{K} v b^r_{v}}{\sum_{v=0}^{K} b^r_v}
							\bigg)}\RB \\
						&\qquad\qquad \myoverset{(c)}{=}  \conv \LB c{\bigg(
								\frac{\sum_{u=1}^{K_T} u b^t_u}{NF}, \
								\frac{\sum_{v=0}^{K} v b^r_v}{NF}
							\bigg)}\RB,
			}
		where $(a)$ comes from~\eqref{eq:bound_max_sizeN}, $(b)$ from Jensen's Inequality,
		and $(c)$ from~\eqref{eq:bound_max_sizeN} again.
		The monotonically decreasing nature of $c(u,v)$, combined with~\eqref{eq:bound_max_sizeMr}-\eqref{eq:bound_max_sizeMt}, yield
			\eqm{
					&\conv \LB c{\bigg(\frac{\sum_{u=1}^{K_T} u b^t_u}{NF}, \
								\frac{\sum_{v=0}^{K} v b^r_v}{NF}							
							\bigg)}\RB \nonumber\\
						&\hspace{12ex} \geq  \conv \LB c{\bigg(
								\frac{F K_T \gamma_T N}{NF}, \
								\frac{F K \gamma N}{NF}
							\bigg)}\RB \\
							&\hspace{12ex} =  \conv \big( c(t_T , t)\big). 
			}
		By recovering~\eqref{eq:proof_end_jump}, we can write that
				\eqm{%
					 \Tc(\zeta^{\text{Tx}},\zeta^{\text{Rx}})
							& \geq  \frac{1}{FN}\sum_{v=0}^{K}\sum_{u=1}^{K_T}b_{u,v}\conv\LB\frac{K-v}{\min(C,L_{T}u) +v}\RB\nonumber \\
							& \geq \conv\LB\frac{K(1-\gamma)}{\min(C,L_{T} t_T) + t}\RB, \label{eq:lower_bound_concluding}							
				}			
		which concludes the proof. \qed

		\subsection{Discussion on the CSI acquisition}\label{sec:csi}

The CSIT acquisition phase can be done in a standard way {such that the $L$ users} (set $\lambda$) communicate pilots, which will allow the transmitter to estimate the channels of these users. As such, here we focus on the process of CSIR acquisition where the goal is to communicate at each user of set $\pi\cup \lambda$ the channel-precoder products. The process requires $1$ training slot for each precoder, which amounts to $L$ training slots per transmission.

Communication of precoder $\mathbf{h}^{\perp}_{\mu}, \mu\subset[L], |\mu| = L-1$, takes the form
\begin{equation}
	\mathbf{x}_{\mu} =
	\begin{bmatrix}
		 \mathbf{h}_{\mu}^{\perp}(1) \mathbf{s}(1)\\
		 \vdots\\
		  \mathbf{h}_{\mu}^{\perp}(L)  \mathbf{s}(L)
	\end{bmatrix}
\end{equation}
where $\mathbf{s}$ denotes a single training vector.

The received message, ignoring the noise for simplicity, at some user $k\in[K]$, takes the form
\begin{equation}
	y_{k} = \mathbf{h}^{\dagger}_{k} \mathbf{x}_{\mu}  = \sum_{\ell = 1}^{L}  \mathbf{h}^{\dagger}_{k}(\ell)\mathbf{h}_{\mu}^{\perp}(\ell)  \mathbf{s}(\ell)
\end{equation}
from which the composite channel-precoder product $ \mathbf{h}^{\dagger}_{k} \mathbf{h}_{\mu}^{\perp}$ can be calculated.

To summarize, CSIT requires $L$ slots because only the $L$ users need to transmit their channel state, and global CSIR requires $L$ slots because, for each fixed precoder, one training symbol suffices to communicate the composite channel-precoder product to any number of users.

\subsection{Extensive example for the scheme}

We conclude with a more involved example that aims to help the reader gain a deeper understanding of the mechanics of our algorithm.

\begin{ex}
Let us consider the $L=2$ MISO BC with $K=6$ users and cumulative cache size $t= 4$. Then, the required $30$ transmissions to satisfy all users' demands are
\begin{align*}
	&\mathbf{x}_{12,3456}^{1}=\mathbf{H}^{-1}_{12}
	\begin{bmatrix}
		A_{2,3456}^{(1)}\oplus C_{2,1456}^{(1)}\oplus D_{2,1356}^{(1)}\\
		B_{1,3456}^{(1)}\oplus E_{1,2346}^{(1)}\oplus F_{1,2345}^{(1)}
	\end{bmatrix}
 , \\ 	
 &\mathbf{x}_{12,3456}^{2}=\mathbf{H}^{-1}_{12}
	\begin{bmatrix}
		A_{2,3456}^{(2)}\oplus E_{2,1346}^{(1)}\oplus F_{2,1345}^{(1)}\\
		B_{1,3456}^{(2)}\oplus C_{1,2456}^{(1)}\oplus D_{1,2356}^{(1)}
	\end{bmatrix}
 , \\ 	
	&\mathbf{x}_{13,2456}^{1}=\mathbf{H}^{-1}_{13}
	\begin{bmatrix}
		A_{3,2456}^{(1)} \oplus B_{3,1456}^{(1)} \oplus D_{3,1256}^{(1)}\\
		C_{1,2456}^{(2)}\oplus E_{1,2346}^{(2)}\oplus F_{1,2345}^{(2)}
	\end{bmatrix}
 , \\
 	&\mathbf{x}_{13,2456}^{2}=\mathbf{H}^{-1}_{13}
	\begin{bmatrix}
		A_{3,2456}^{(2)}\oplus E_{3,1246}^{(1)}\oplus F_{3,1245}^{(1)}\\
		C_{1,2456}^{(3)}\oplus B_{1,3456}^{(3)}\oplus D_{1,2356}^{(2)}
	\end{bmatrix}
 , \\ 	
	&\mathbf{x}_{14, 2356}^{1}=\mathbf{H}^{-1}_{14}
	\begin{bmatrix}
		A_{4,2356}^{(1)}\oplus B_{4,1356}^{(1)}\oplus C_{4,1256}^{(1)}\\
		D_{1,2356}^{(3)}\oplus E_{1,2346}^{(3)}\oplus F_{1,2345}^{(3)}
	\end{bmatrix}
, \\
	&\mathbf{x}_{14,2356}^{2}=\mathbf{H}^{-1}_{14}
	\begin{bmatrix}
		A_{4,2356}^{(2)}\oplus E_{4,1236}^{(1)}\oplus F_{4,1235}^{(1)}\\
		D_{1,2356}^{(4)}\oplus B_{1,3456}^{(4)}\oplus C_{1,2456}^{(4)}
	\end{bmatrix}
 , \\ 	
	&\mathbf{x}_{15, 2346}^{1}=\mathbf{H}^{-1}_{15}
	\begin{bmatrix}
		A_{5,2346}^{(1)}\oplus B_{5,1346}^{(1)}\oplus C_{5,1246}^{(1)}\\
		E_{1,2346}^{(4)}\oplus D_{1,2356}^{(5)}\oplus F_{1,2345}^{(4)}
	\end{bmatrix}
, \\
	&\mathbf{x}_{15,2346}^{2}=\mathbf{H}^{-1}_{15}
	\begin{bmatrix}
		A_{5,2346}^{(2)}\oplus D_{5,1236}^{(1)}\oplus F_{5,1234}^{(1)}\\
		E_{1,2346}^{(5)}\oplus B_{1,3456}^{(5)}\oplus C_{1,2456}^{(5)}
	\end{bmatrix}
 , \\ 	
	&\mathbf{x}_{16,2345}^{1}=\mathbf{H}^{-1}_{16}
	\begin{bmatrix}
		A_{6,2345}^{(1)}\oplus B_{6,1345}^{(1)}\oplus C_{6,1245}^{(1)}\\
		F_{1,2345}^{(5)}\oplus D_{1,2356}^{(6)}\oplus E_{1,2346}^{(6)}
	\end{bmatrix}
, \\
	&\mathbf{x}_{16, 2345}^{2}=\mathbf{H}^{-1}_{16}
	\begin{bmatrix}
		A_{6,2345}^{(2)}\oplus D_{6,1235}^{(1)}\oplus E_{6,1234}^{(1)}\\
		F_{1,2345}^{(6)}\oplus B_{1,3456}^{(6)}\oplus C_{1,2456}^{(6)}
	\end{bmatrix}
 , \\ 	
	&\mathbf{x}_{23, 1456}^{1}=\mathbf{H}^{-1}_{23}
	\begin{bmatrix}
		B_{3,1456}^{(2)}\oplus A_{3,2456}^{(3)}\oplus D_{3,1256}^{(2)}\\
		C_{2,1456}^{(2)}\oplus E_{2,1346}^{(2)}\oplus F_{2,1345}^{(2)}
	\end{bmatrix}
, \\
	&\mathbf{x}_{23,1456}^{2}=\mathbf{H}^{-1}_{23}
	\begin{bmatrix}
		B_{3,1456}^{(3)}\oplus E_{3,1246}^{(2)}\oplus F_{3,1245}^{(2)}\\
		C_{2,1456}^{(3)}\oplus A_{2,3456}^{(3)}\oplus D_{2,1356}^{(2)}
	\end{bmatrix}
 , \\ 	
	&\mathbf{x}_{24, 1356}^{1}=\mathbf{H}^{-1}_{24}
	\begin{bmatrix}
		B_{4,1356}^{(2)}\oplus A_{4,2356}^{(3)} \oplus C_{4,1256}^{(2)}\\
		D_{2,1356}^{(3)}\oplus E_{2,1346}^{(3)}\oplus F_{2,1345}^{(3)}
	\end{bmatrix}
, \\
	&\mathbf{x}_{24, 1356}^{2}=\mathbf{H}^{-1}_{24}
	\begin{bmatrix}
		B_{4,1356}^{(3)}\oplus E_{4,1236}^{(2)}\oplus F_{4,1235}^{(2)}\\
		D_{2,1356}^{(4)}\oplus A_{2,3456}^{(4)}\oplus C_{2,1456}^{(4)}
	\end{bmatrix}
 , \\ 	
	&\mathbf{x}_{25, 1346}^{1}=\mathbf{H}^{-1}_{25}
	\begin{bmatrix}
		B_{5,1346}^{(2)}\oplus A_{5,2346}^{(3)}\oplus C_{5,1246}^{(2)}\\
		E_{2,1346}^{(4)}\oplus D_{2,1356}^{(5)}\oplus F_{2,1345}^{(4)}
	\end{bmatrix}
, \\
	&\mathbf{x}_{25, 1346}^{2}=\mathbf{H}^{-1}_{25}
	\begin{bmatrix}
		B_{5,1346}^{(3)}\oplus D_{5,1236}^{(2)}\oplus F_{5,1234}^{(2)}\\
		E_{2,1346}^{(5)}\oplus A_{2,3456}^{(5)}\oplus C_{2,1456}^{(5)}
	\end{bmatrix}
 , \\ 	
	&\mathbf{x}_{26, 1345}^{1}=\mathbf{H}^{-1}_{26}
	\begin{bmatrix}
		B_{6,1345}^{(2)}\oplus A_{6,2345}^{(3)}\oplus C_{6,1245}^{(2)}\\
		F_{2,1345}^{(5)}\oplus D_{2,1356}^{(6)}\oplus E_{2,1346}^{(6)}
	\end{bmatrix}
, \\
	&\mathbf{x}_{26, 1345}^{2}=\mathbf{H}^{-1}_{26}
	\begin{bmatrix}
		B_{6,1345}^{(3)}\oplus D_{6,1235}^{(2)}\oplus E_{6,1234}^{(2)}\\
		F_{2,1345}^{(6)}\oplus A_{2,3456}^{(6)}\oplus C_{2, 1456}^{(6)}
	\end{bmatrix}
 , \\ 	
	&\mathbf{x}_{34, 1256}^{1}=\mathbf{H}^{-1}_{34}
	\begin{bmatrix}
		C_{4,1256}^{(3)}\oplus A_{4,2356}^{(4)}\oplus B_{4,1356}^{(4)}\\
		D_{3,1256}^{(3)}\oplus E_{3,1246}^{(3)}\oplus F_{3,1245}^{(3)}
	\end{bmatrix}
, \\
	&\mathbf{x}_{34, 1256}^{2}=\mathbf{H}^{-1}_{34}
	\begin{bmatrix}
		C_{4,1256}^{(4)}\oplus E_{4,1236}^{(3)}\oplus F_{4,1235}^{(3)}\\
		D_{3,1256}^{(4)}\oplus A_{3,2456}^{(4)}\oplus B_{3,1456}^{(4)}
	\end{bmatrix}
 , \\ 	
	&\mathbf{x}_{35, 1246}^{1}=\mathbf{H}^{-1}_{35}
	\begin{bmatrix}
		C_{5,1246}^{(3)}\oplus A_{5,2346}^{(4)}\oplus B_{5,1346}^{(4)}\\
		E_{3,1246}^{(4)}\oplus D_{3,1256}^{(5)}\oplus F_{3,1245}^{(4)}
	\end{bmatrix}
, \\
	&\mathbf{x}_{35, 1246}^{2}=\mathbf{H}^{-1}_{35}
	\begin{bmatrix}
		C_{5,1246}^{(4)}\oplus D_{5,1236}^{(3)}\oplus F_{5,1234}^{(3)}\\
		E_{3,1246}^{(5)}\oplus A_{3,2456}^{(5)}\oplus B_{3,1456}^{(5)}
	\end{bmatrix}
 , \\ 	
	&\mathbf{x}_{36, 1245}^{1}=\mathbf{H}^{-1}_{36}
	\begin{bmatrix}
		C_{6,1245}^{(3)}\oplus A_{6,2345}^{(4)}\oplus B_{6,1345}^{(4)}\\
		F_{3,1245}^{(5)}\oplus D_{3,1256}^{(6)}\oplus E_{3,1246}^{(6)}
	\end{bmatrix}
, \\
	&\mathbf{x}_{36, 1245}^{2}=\mathbf{H}^{-1}_{36}
	\begin{bmatrix}
		C_{6,1245}^{(4)}\oplus D_{6,1235}^{(3)}\oplus E_{6,1234}^{(3)}\\
		F_{3,1245}^{(6)}\oplus A_{3,2456}^{(6)}\oplus B_{3,1456}^{(6)}
	\end{bmatrix}
 , \\ 	
	&\mathbf{x}_{45, 1236}^{1}=\mathbf{H}^{-1}_{45}
	\begin{bmatrix}
		D_{5,1236}^{(4)}\oplus A_{5,2346}^{(5)}\oplus B_{5,1346}^{(5)}\\
		E_{4,1236}^{(4)}\oplus C_{4,1256}^{(5)}\oplus F_{4, 1235}^{(4)}
	\end{bmatrix}
, \\
	&\mathbf{x}_{45,1236}^{2}=\mathbf{H}^{-1}_{45}
	\begin{bmatrix}
		D_{5,1236}^{(5)}\oplus C_{5,1246}^{(5)}\oplus F_{5,1234}^{(4)}\\
		E_{4,1236}^{(5)}\oplus A_{4,2356}^{(5)}\oplus B_{4,1356}^{(5)}
	\end{bmatrix}
 , \\ 	
	&\mathbf{x}_{46,1235}^{1}=\mathbf{H}^{-1}_{46}
	\begin{bmatrix}
		D_{6,1235}^{(4)}\oplus A_{6,2345}^{(5)}\oplus B_{6,1345}^{(5)}\\
		F_{4,1235}^{(5)}\oplus C_{4,1256}^{(6)}\oplus E_{4,1236}^{(6)}
	\end{bmatrix}
 , \\
	&\mathbf{x}_{46, 1235}^{2}=\mathbf{H}^{-1}_{46}
	\begin{bmatrix}
		D_{6,1235}^{(5)}\oplus C_{6,1245}^{(5)}\oplus E_{6,1234}^{(4)}\\
		F_{4,1235}^{(6)}\oplus A_{4,2356}^{(6)}\oplus B_{4,1356}^{(6)}
	\end{bmatrix}
 , \\ 	
	&\mathbf{x}_{56,1234}^{1}=\mathbf{H}^{-1}_{56}
	\begin{bmatrix}
		E_{6,1234}^{(5)}\oplus A_{6,2345}^{(6)}\oplus B_{6,1345}^{(6)}\\
		F_{5,1234}^{(5)}\oplus C_{5,1246}^{(6)}\oplus D_{5,1236}^{(6)}
	\end{bmatrix}
, \\
	&\mathbf{x}_{56,1234}^{2}=\mathbf{H}^{-1}_{56}
	\begin{bmatrix}
		E_{6,1234}^{(6)}\oplus C_{6,1245}^{(6)}\oplus D_{6,1235}^{(6)}\\
		F_{5,1234}^{(6)}\oplus A_{5,2346}^{(6)}\oplus B_{5,1346}^{(6)}
	\end{bmatrix}.
\end{align*}

By examining any of the above transmitted vectors, we can deduce that each transmission serves a total of $6$ users, with a feedback cost of $C=2$.

\end{ex}
			

\enlargethispage{-1.2cm}

\IEEEtriggeratref{3}

\end{document}